\newcommand\sem[1]{\llbracket #1\rrbracket}
\newcommand\ofSig[1]{#1^\circ}
\newcommand\dom{\mathrm{dom}}
\newcommand\boldemph[1]{\textbf{\emph{#1}}}
\newcommand\cutout[1]{}
\newcommand{\ret}{ret} %  ret
\newcommand{\nil}{\texttt{nil}} %  nil
\newcommand{\fail}{\texttt{fail}} %  fail
\newcommand{\bigchi}{\mbox{\Large$\chi$}} % big chi
\newcommand{\refs}{\texttt{Refs}}
\newcommand{\meths}{\texttt{Meths}}
\newcommand{\vars}{\texttt{Vars}}
\newcommand{\vals}{\texttt{Vals}}
\newcommand{\cvals}{\texttt{CVals}}
\newcommand{\ssarefs}{\texttt{SSAVars}}
\newcommand{\termsM}{\texttt{Terms}}
\newcommand{\pts}{\texttt{Pts}}
\newcommand\atype[1]{\texttt{#1}}
\newcommand\Item[1][]{%
	\ifx\relax#1\relax  \item \else \item[#1] \fi
	\abovedisplayskip=0pt\abovedisplayshortskip=0pt~\vspace*{-\baselineskip}}
\newcommand{\preserves}{\simeq}
\newcolumntype{L}{>{$}l<{$}} % math-mode version of "l" column type
\newcolumntype{R}{>{$}r<{$}} % math-mode version of "r" column type
\newcolumntype{C}{>{$}c<{$}} % math-mode version of "c" column type
\newcommand\TODO[1]{}
\newcommand\pair[1]{\langle #1 \rangle}
\newcommand{\CC}{C\nolinebreak\hspace{-.05em}\raisebox{.4ex}{\tiny\bf +}\nolinebreak\hspace{-.10em}\raisebox{.4ex}{\tiny\bf +}}
\lstdefinelanguage{Lambda}{%
  morekeywords={%
    if,then,else,fix,let,letrec,in,fail,nil,match,with,assert,new % keywords go here
  },%
  morekeywords={[2]int},   % types go here
  otherkeywords={:,+,-,=}, % operators go here
  literate={% replace strings with symbols
    {rarr}{{$\to$}}{1}
    {lambda}{{$\lambda$}}{1}
    {leq}{{$\leq$}}{1}
    {geq}{{$\geq$}}{1}
  },
  basicstyle={\small},
  keywordstyle={\bfseries},
  keywordstyle={[2]\itshape}, % style for types
  keepspaces,
  tabsize=2,
  columns=flexible,
  mathescape % optional
}[keywords,comments,strings]%
\newcommand{\appsigma}{\{\sigma\}}
\newcommand{\appsigmas}[1]{\{\sigma_#1\}}
\newcommand\letin[1]{\text{let}~#1~\text{in}}
\newcommand{\tnil}{$\llbracket M,R,C,D,\phi,\nil\rrbracket = (\ret,(\ret=\nil)\land \phi,R,C,D)$}
\newcommand{\tfail}{$\llbracket \fail,R,C,D,\phi,k\rrbracket = (\ret,(\ret=\fail)\land \phi,R,C,D)$}
\newcommand{\tval}{$\begin{aligned}[t]
&\llbracket v,R,C,D,\phi,k\rrbracket = (\ret,(\ret=v)\land \phi,R,C,D)\\
&\qquad\qquad\qquad\quad\;\ \text{where $v = i,(),x,m$}
\end{aligned}$}
\newcommand{\tderef}{$\llbracket {!}r,R,C,D,\phi,k\rrbracket = (\ret,(\ret=D(r))\land \phi,R,C,D)$}
\newcommand{\tlambda}{$\begin{aligned}[t]
&\llbracket \lambda x.M,R,C,D,\phi,k\rrbracket = (\ret,(\ret=m)\land \phi,R',C,D)\\
&\qquad\qquad\qquad\quad\;\ \text{where $R'=R[m \mapsto \lambda x.M]$ and $m$ fresh}
\end{aligned}$}
\newcommand{\tpi}{$
	\begin{aligned}[t]
		& \llbracket\pi_i\,M,R,C,D,\phi,k\rrbracket =\\
		&\quad \begin{aligned}[t]
			& \letin{(\ret_1,\phi_1,R_1,C_1,D_1) = \llbracket M,R,C,D,\phi,k\rrbracket}\\
			& (\ret,(F~\ret_1~\ret~(\ret=\pi_i\,ret_1))\land \phi_1,R_1,C_1,D_1)
		\end{aligned}
	\end{aligned}$}
\newcommand{\tassign}{$
	\begin{aligned}[t]
		& \llbracket r:=M,R,C,D,\phi,k\rrbracket =\\
		&\quad \begin{aligned}[t]
			& \letin{(\ret_1,\phi_1,R_1,C_1,D_1) = \llbracket M,R,C,D,\phi,k\rrbracket}\\
			& \letin{C_1' = C_1[r]}\
			\letin{D_1' = D_1[r\mapsto C_1'(r)]}\\
			& (\ret,(F~\ret_1~\ret~((\ret={()}) \land (D_1'(r)=\ret_1)))\\
			& \qquad {}\land \phi_1,R_1,C_1',D_1')
		\end{aligned}
	\end{aligned}$}
\newcommand{\tplus}{$
	\begin{aligned}[t]
		& \llbracket M_1 \oplus M_2,R,C,D,\phi,k\rrbracket =\\
		&\quad \begin{aligned}[t]
			& \letin{(\ret_1,\phi_1,R_1,C_1,D_1) = \llbracket M_1,R,C,D,\phi,k\rrbracket}\\
			& \letin{(\ret_2,\phi_2,R_2,C_2,D_2) = 
			 \llbracket M_2,R_1,C_1,D_1,\phi_1,k\rrbracket}\\
			& (\ret,(F~\ret_1~\ret~(F~\ret_2~\ret~(\ret=\ret_1 \oplus \ret_2)))\\
			&\qquad \land \phi_2,R_2,C_2,D_2)
		\end{aligned}
	\end{aligned}$}
\newcommand{\tpair}{$
	\begin{aligned}[t]
	& \llbracket \pair{M_1,M_2},R,C,D,\phi,k\rrbracket =\\
		&\quad \begin{aligned}[t]
			& \letin{(\ret_1,\phi_1,R_1,C_1,D_1) = \llbracket M_1,R,C,D,\phi,k\rrbracket}\\
			& \letin{(\ret_2,\phi_2,R_2,C_2,D_2) =
			 \llbracket M_2,R_1,C_1,D_1,\phi_1,k\rrbracket}\\
			& (\ret,(F~\ret_1~\ret~(F~\ret_2~\ret~(\ret= \pair{\ret_1,\ret_2})))\\
			&\qquad \land \phi_2,R_2,C_2,D_2)
		\end{aligned}
	\end{aligned}$}
\newcommand{\tletin}{$
	\begin{aligned}[t]
	& \llbracket \texttt{let $x=M$ in $M'$},R,C,D,\phi,k\rrbracket =\\
		&\quad \begin{aligned}[t]
			& \letin{(\ret_1,\phi_1,R_1,C_1,D_1) = \llbracket M,R,C,D,\phi,k\rrbracket}\\
			& \letin{(\ret_2,\phi_2,R_2,C_2,D_2) =\\ 
			& \qquad \llbracket M'\{\ret_1/x\},R_1,C_1,D_1,\phi_1,k\rrbracket}\\
			& (\ret,(F~\ret_1~\ret~(F~\ret_2~\ret~(\ret=\ret_2)))\\
			& \qquad {}\land \phi_2,R_2,C_2,D_2)
		\end{aligned}
	\end{aligned}$}
\newcommand{\tletrec}{$
	\begin{aligned}[t]
	& \llbracket \texttt{letrec $f=\lambda x.M$ in $M'$},R,C,D,\phi,k\rrbracket =\\
		&\quad \begin{aligned}[t]
			& \letin{\text{$m,f'$ be fresh}}\\
            & \letin{R' = R[m \mapsto \lambda x.M\{f'/f\}]}\\
			& \llbracket M'\{f'/f\},R',C,D,\phi \land (f' = m),k\rrbracket
		\end{aligned}
	\end{aligned}$}
\newcommand{\tapplym}{$
	\begin{aligned}[t]
		& \llbracket m\,M,R,C,D,\phi,k\rrbracket =\\
		&\quad \begin{aligned}[t]
			& \letin{(\ret_1,\phi_1,R_1,C_1,D_1) = \llbracket M,R,C,D,\phi,k\rrbracket}\\
			& \letin{R(m)\text{ be }\lambda x.N}\\
			& \letin{(\ret_2,\phi_2,R_2,C_2,D_2) = \\
			& \qquad \llbracket N\{\ret_1/x\},R_1,C_1,D_1,\phi_1,k-1\rrbracket}\\
			& (\ret,(F~\ret_1~\ret~(F~\ret_2~\ret~(\ret=\ret_2)))\\
			& \qquad \land \phi_2,R_2,C_2,D_2)
		\end{aligned}
	\end{aligned}$}
\newcommand{\tifthenelse}{$
	\begin{aligned}[t]
		& \llbracket \texttt{if $M_b$ then $M_1$ else $M_0$},R,C,D,\phi,k\rrbracket =\\
		&\quad \begin{aligned}[t]
			& \letin{(\ret_b,\phi_b,R_b,C_b,D_b) = \llbracket M_b,R,C,D,\phi,k\rrbracket}\\
			& \letin{(\ret_0,\phi_0,R_0,C_0,D_0) = 
			 \llbracket 	M_0,R_b,C_b,D_b,\phi_b,k\rrbracket}\\
			& \letin{(\ret_1,\phi_1,R_1,C_1,D_1) = 
			 \llbracket M_1,R_0,C_0,D_b,\phi_0,k\rrbracket}\\
			& \letin{C' = C_1[r_1]\cdots[r_n]\ (\varPi=\{r_1,\dots,r_n\})}\\
			& \letin{\psi_0 = (\ret_b = 0) \implies (F~\ret_0~\ret~((\ret=\ret_0) \\
                          &\;\,\qquad\qquad\qquad\qquad\qquad
                          {}\land\!\! \bigwedge_{r \in \varPi}\!\! (C'(r)=D_1(r))))}\\
                        % & \qquad\quad\left(\vcenter{\hbox{$\displaystyle
			% \begin{aligned}[t]
			% & F~\ret_0~\ret~((\ret=\ret_0) \land \\
			% & \bigwedge_{r \in Dom(C')} (C'(r)=D_0(r)))
			% \end{aligned}	
			% $}}\right)}
			% \\
                      & \letin{\psi_1 = (\ret_b \neq 0) \implies
                        (F~\ret_1~\ret~((\ret=\ret_1) \\
                         &\;\,\qquad\qquad\qquad\qquad\qquad
                        {}\land\!\!\bigwedge_{r \in \varPi}\!\! (C'(r)=D_1(r))))}\\
                        % & \qquad \left(\vcenter{\hbox{$\displaystyle
			% \begin{aligned}[t]
			% & F~\ret_1~\ret~((\ret=\ret_1) \land \\
			% & \bigwedge_{r \in Dom(C')} (C'(r)=D_1(r)))
			% \end{aligned}	
                        % $}}\right)}
                        % \\
			& (\ret,(F~\ret_b~\ret~(\psi_0 \land \psi_1))\land \phi_1,R_1,C',C')
		\end{aligned}
	\end{aligned}$}
\newcommand{\tapplyx}{$
	\begin{aligned}[t]
		& \llbracket x^\theta\,M,R,C,D,\phi,k\rrbracket =\\
		&\quad \begin{aligned}[t]
                  & \letin{(\ret_0,\phi_0,R_0,C_0,D_0) = \llbracket M,R,C,D,\phi,k\rrbracket}\\
                  &\text{if }R \upharpoonright \theta =\emptyset\text{ then }(\ret,(\ret=\nil)\land\phi_0,R_0,C_0,D_0)\\
&\text{else }
			\letin{R \upharpoonright \theta\text{ be }\{m_1,...,m_n\}}\\
			&\text{for each }i \in \{1,...,n\}:\\
			&\quad
			\begin{aligned}[t]
				& \letin{R(m_i)\text{ be } \lambda y_i.N}\\
				& \letin{(\ret_i,\phi_i,R_i,C_i,D_i) = \\
				& \qquad \llbracket N_i\{\ret_0/y_i\},R_{i-1},
				C_{i-1},D_0,\phi_{i-1},k-1\rrbracket}\\
			\end{aligned}\\
			& \letin{C_n' = C_n[r_1]\cdots[r_j]\ (\varPi=\{r_1,\dots,r_j\})}\\
			& \letin{\psi = \bigwedge_{i=1}^{n}
				\left(\vcenter{\hbox{$\displaystyle
				\begin{aligned}[t]
					&(x=m_i) \implies \\
					&((F~\ret_i~\ret~(\ret=\ret_i)) \land\\ 
					&\bigwedge_{r \in \varPi}{(C_n'(r)=D_i(r))})
				\end{aligned}
				$}}\right)}\\
			& (\ret,(F~\ret_0~\ret~\psi)\land \phi_n,R_n,C_n',C_n')
			\end{aligned}
	\end{aligned}$}
\begin{document}
\title{Higher-Order Bounded Model Checking}
%
%\titlerunning{Abbreviated paper title}
% If the paper title is too long for the running head, you can set
% an abbreviated paper title here
%
\author{Yu-Yang Lin \and Nikos Tzevelekos}
%
%\authorrunning{F. Author et al.}
% First names are abbreviated in the running head.
% If there are more than two authors, 'et al.' is used.
%
\institute{Queen Mary University of London}
\maketitle              % typeset the header of the contribution
\begin{abstract}
  We present a Bounded Model Checking technique for higher-order programs. The vehicle of our study is a higher-order calculus with general references. Our technique is a symbolic state syntactical translation based on SMT solvers, adapted to a setting where the values passed and stored during computation can be functions of arbitrary order. We prove that our algorithm is sound, and devise an optimisation based on points-to analysis to improve scalability. We moreover provide a prototype implementation of the algorithm with experimental results showcasing its performance.

%\keywords{First keyword  \and Second keyword \and Another keyword.}
\end{abstract}
\section{Introduction}

Bounded Model Checking~\cite{DBLP:conf/tacas/BiereCCZ99} (BMC) is a
model checking technique that allows for
highly automated and scalable SAT/SMT-based verification and has been widely used to find errors in C-like languages~\cite{DBLP:conf/tacas/ClarkeKL04,DBLP:conf/tacas/MorseRCN014,DBLP:journals/tcad/DSilvaKW08,DBLP:conf/tacas/AmlaKMM03}.
BMC amounts to bounding the executions of programs by unfolding loops only up to a given bound, and model checking the resulting execution graph. Since the advent of~\textsc{Cbmc}~\cite{DBLP:conf/tacas/ClarkeKL04}, the mainstream approach additionally proceeds by symbolically executing program paths and gathering the resulting path conditions in propositional formulas which can then be passed on to  SAT/SMT solvers.
Thus, BMC performs a syntactic translation of program source code into a propositional formula, and uses the power of SAT/SMT solvers to check the bounded behaviour of programs.

Being a Model Checking technique, BMC has the ability to produce \textit{counterexamples}, which are execution traces that lead to the violation of desired properties. A specific advantage of BMC over unbounded techniques is that it avoids the full effect of state-space explosion at the expense of full verification. On the other hand, since BMC is inconclusive if the formula is unsatisfiable, it is generally regarded as a bug-finding or underapproximation technique, which lets it avoid spurious errors. While it tends to be the most empirically effective approach for ``shallow" bugs~\cite{DBLP:journals/tcad/DSilvaKW08,DBLP:conf/tacas/AmlaKMM03}, bugs in deep loops and recursion are often a weakness. It is only possible to prove complete correctness if bounds for loops and recursion are determinable.

The above approach has been predominantly applied to imperative, first-order languages and, while tools like
\textsc{Cbmc} can handle {\CC} (and, more recently, \textsc{Java} bytecode), the foundations of BMC for higher-order programs have not been properly laid. This is what we address herein.
%
%
%BMC has been somewhat isolated from higher-order languages, with few exceptions such as 
%
We propose a symbolic BMC procedure for higher-order functional/imperative programs that may contain free variables of ground type. Our contributions involve a syntactical translation to apply BMC to higher-order languages with higher-order state, a proof that the approach is sound, an optimisation based on points-to analysis to improve scalability, and a prototype implementation of the procedure with experimental results showcasing its performance.

As with most approaches to software BMC, we translate a given higher-order program into a propositional formula for an SMT solver to check for satisfiability, where formulas are satisfiable only if a violation is reachable within a given bound.
Where in first-order programs BMC places a bound on loop unfolding, in the higher-order setting we place the bound on nested recursive calls. 
The main challenge for the translation then is the symbolic execution of paths which involve the flow of higher-order terms, by either variable binding or use of the store. This is achieved by adapting the standard technique of Static Single Assignment to a setting where variables/references can be of  higher order.
To handle higher-order terms in particular, we use a nominal approach to methods, whereby each method is uniquely identified by a name. We capture program behaviour by also uniquely identifying every step in the computation tree with a return variable; analogous to how \textsc{Cbmc}~\cite{DBLP:conf/tacas/ClarkeKL04} captures the behaviour of sequencing commands in ANSI-C programs.

To give a simple example of the approach, consider the following code, where $r$ is a reference of type $\texttt{int}\to\texttt{int}$, and $f,g,h$ are variables of 
type $\texttt{int}\to\texttt{int}$, and $n,x$ are variables of type $\texttt{int}$.
\begin{lstlisting}[numbers=left,language=lambda,xleftmargin=1em]
let f = lambda x,g,h. if (x <= 0) then g else h 
in              
r := f n (lambda x. x-1) (lambda x. x+1);
assert(!r n >= n)
\end{lstlisting}
In the code above, a function is assigned to reference $r$. In a symbolic setting, it is not immediately obvious which function to call when dereferencing $r$ in line~4.
Luckily, we know that when calling $f$ in line~3, its value can only be the one bound to it in line~1. Thus, a first transformation of the code could be:
\begin{lstlisting}[numbers=left,language=lambda,xleftmargin=1em,firstnumber=3]
r := if (n <= 0) then (lambda x. x-1) else (lambda x. x+1);
assert(!r n >= n)
\end{lstlisting}
The assignment in line~3 can be facilitated by using a return variable $\ret$ and method names for $(\lambda x . x-1)$ and $(\lambda x . x + 1)$:
\begin{lstlisting}[numbers=left,language=lambda,xleftmargin=1em]
let m1 = lambda x. x-1 in let m2 = lambda x. x+1 in 
let ret = if (n <= 0) then m1 else m2 in
r := ret;
assert(!r n >= n)
\end{lstlisting}
We now need to symbolically decide how to dereference $r$. 
The simplest solution is try to match $r$ with all existing functions of matching type, in this case $m1$ and $m2$:
\begin{lstlisting}[numbers=left,language=lambda,xleftmargin=1em]
let m1 = lambda x. x-1 in let m2 = lambda x. x+1 in 
let ret = if (n <= 0) then m1 else m2 in
r := ret;
let ret' = match r with
           | m1 -> m1 n
           | m2 -> m2 n in
assert(ret' >= n)
\end{lstlisting}
Performing the substitutions of $m1,m2$, we can read off the following formula for checking falsity of the assertion:
\[\begin{tabular}{@{}R@{}L}
    (\ret'<n) &{}\land (r=m1 \Rightarrow ret' = n-1)\\
              &{}\land (r=m2 \Rightarrow ret' = n+1) \land (r=ret)\\
    &{}\land(n<=0\Rightarrow ret=m1)\land(n>0\Rightarrow ret=m2)
\end{tabular}\]
The above is true e.g.\ for $n=0$, and hence the code violates the assertion.

These ideas underpin our first BMC translation, which is presented in Section~\ref{sec:trans}.
The language we examine, \textsc{HORef}, is a higher-order language with general references and integer arithmetic.
While correct,
one can quickly see that our first translation is inefficient when trying to resolve the flow of functions to references and variables. In effect, it explores all possible methods of the appropriate type that have been created so far, and relies on the solver to pick the right one. In Section~\ref{sec:points-to}, we optimise the translation by  restricting such choices according to an analysis akin to \emph{points-to analysis}~\cite{DBLP:conf/popl/Steensgaard96,Andersen:94:PhD}.
Finally, in Section~\ref{sec:tool} we present an implementation of our technique in a BMC tool for a higher-order \textsc{OCaml}-like syntax extending \textsc{HORef} and test it on several example programs adapted from the \textsc{MoCHi} benchmark~\cite{DBLP:conf/pepm/SatoUK13}.

\cutout{
With the widespread use of languages with higher-order features, a BMC formalisation for higher-order languages seems useful. Furthermore, the task of model checking higher-order languages can be challenging, especially in a symbolic setting. An example of such subtleties is that it is easy to loose track of methods in a symbolic environment. To illustrate this, we provide the following example.
\begin{example}
	Consider the functional program with higher-order references written in \textsc{OCaml}.
	\begin{lstlisting}[caption={example program with higher-order store},label={lst:simple},frame=tb,language=ML]
let r = ref (fun x -> 0)
let f x g h = if x <= 0 then g else h
              
let main n = r := f n (fun x -> x - 1) 
                      (fun x -> x + 1);
             assert(!r n >= n)
	\end{lstlisting}
	In the main function of Listing~\ref{lst:simple}, a function is assigned to reference $r$. In a symbolic setting, it is not immediately obvious which function to call when dereferencing $r$. In this case, there are only two possible functions which can be assigned: $(\lambda x . x-1)$ and $(\lambda x . x + 1)$. The simplest restriction is to call all existing functions of matching type. However, this is not efficient, especially in the presence of multiple similar functions. A more involved approach would be to build the set of possible functions that $r$ may be pointing at, which is provided by techniques such as Control Flow Analysis. In this case, we know that Listing~\ref{lst:simple} violates the assertion since there exists a counterexample in $n=0$, where applying $f$ results in the assignment $r := (\lambda x . x-1)$, so $(!r\,0) = -1$.
\end{example}}

%%% Local Variables:
%%% mode: latex
%%% TeX-master: "main"
%%% End:

\section{The Language: \textsc{HORef}}

\newcommand\preG{} %{\Gamma\vdash}
\newcommand\vphi[1]{}

\begin{figure*}[t]
	\centering
	\setlength{\tabcolsep}{2pt}
	\renewcommand{\arraystretch}{1.5}
	\begin{tabular}{R R L}
		\textit{Terms}& \quad \termsM \ \ni \ M ::=&\fail \mid x \mid m \mid i \mid ()
		\mid r:=M 
		\mid {!r} 
		\mid M \oplus M 
		\mid \langle M,M \rangle \\
		&&\mid \pi_1\,M 
		\mid \pi_2\,M 		
		\mid x~M 
		\mid m~M 
		\mid \texttt{if $M$ then $M$ else $M$} \\
		&&\mid  \texttt{let $x = M$ in $M$}
        \mid  \texttt{letrec $x = \lambda x.M$ in $M$}
		\mid  \lambda x . M\\
		\textit{Values}& \quad \vals \ \ni \ v ::=&x \mid m \mid i \mid () \mid \pair{v,v}
	\end{tabular}
	\renewcommand{\arraystretch}{3}
	\begin{tabular}{ c }
        \hline
		%=====Fail=====
{		\begin{prooftree}
			\hypo{ \vphi{()}  }
			\infer1[]{ \preG \fail : \theta }
		\end{prooftree}
}		
		%=====Unit=====
{		\begin{prooftree}
			\hypo{ \vphi{()} }
			\infer1[]{ \preG () : \atype{unit} }
		\end{prooftree}
}		
		%=====Int=====
{		\begin{prooftree}
			\hypo{ \vphi{()} }
			\infer1[]{ \preG i : \atype{int} }
		\end{prooftree}
}		
		%=====Variables=====
{		\begin{prooftree}
			\hypo{ x\in\vars_\theta }
			\infer1[]{ \preG x : \theta }
		\end{prooftree}
}		
		%=====Methods=====
{		\begin{prooftree}
			\hypo{ m \in \meths_{\theta,\theta'} }
			\infer1[]{ \preG m : \theta \rightarrow \theta' }
		\end{prooftree}
}		
		%=====Arithmetic=====
{		\begin{prooftree}
			\hypo{  M_1, M_2 : \atype{int} }
			\infer1[]{  M_1 \oplus M_2 : \atype{int} }
		\end{prooftree}}
		\\		
		%=====Conditionals=====
{		\begin{prooftree}
			\hypo{ \preG M : \atype{int} }
			\hypo{ \preG M_0 : \theta }
                        \hypo{ \preG M_1 : \theta }
			\infer3[]{ \preG \texttt{if $M$ then $M_1$ else $M_0$} : \theta }
		\end{prooftree}
}		
		%=====Pairs=====
{		\begin{prooftree}
			\hypo{ \preG M_1 : \theta_1 }
			\hypo{ \preG M_2 : \theta_2 }
			\infer2[]{ \preG \langle M_1 , M_2 \rangle : \theta_1 \times \theta_2 }
		\end{prooftree}
}		
		%=====Projection=====
{		\begin{prooftree}
			\hypo{ \preG \langle M_1 , M_2 \rangle : \theta_1 \times \theta_2 }
			\infer1[]{ \preG \pi_i \langle M_1 , M_2 \rangle : \theta_i~~(i=1,2) }
		\end{prooftree}
}		\\
		%=====Dereferencing=====
{		\begin{prooftree}
			\hypo{ r \in \refs_{\theta} }
			\infer1[]{ \preG {!r} : \theta }
		\end{prooftree}
}
		% =====Assigning Ints=====
{		\begin{prooftree}
			\hypo{ r \in \refs_{\theta} }
			\hypo{ \preG M : \theta }
			\infer2[]{ \preG r := M : \atype{unit} }
		\end{prooftree}
}
          \cutout{
		%=====Dereferencing Ints=====
		\begin{prooftree}
			\hypo{ r \in \refs_{\atype{int}} }
			\infer1[]{ \preG {!r} : \atype{int} }
		\end{prooftree}
		
		%=====Assigning Ints=====
		\begin{prooftree}
			\hypo{ r \in \refs_{\atype{int}} }
			\hypo{ \preG M : \atype{int} }
			\infer2[]{ \preG r := M : \atype{unit} }
		\end{prooftree}
		
		%=====Dereferencing Meths=====
		\begin{prooftree}
			\hypo{ r \in \refs_{\theta,\theta'} }
			\infer1[]{ \preG {!r} : \theta \rightarrow \theta' }
		\end{prooftree}
		
		%=====Assigning Meths=====
		\begin{prooftree}
			\hypo{ r \in \refs_{\theta,\theta'} }
			\hypo{ \preG M : \theta \rightarrow \theta' }
			\infer2[]{ \preG r := M : \atype{unit} }
		\end{prooftree}
          }	
		%=====Lambdas=====
{		\begin{prooftree}
                  \hypo{ M : \theta' }
                  \hypo{ x : \theta }
			\infer2[]{ \preG \lambda x . M : \theta \rightarrow \theta' }
		\end{prooftree}
}		
		%=====Let bindings=====
{		\begin{prooftree}
			\hypo{ \preG M : \theta }
			\hypo{  N : \theta' }
			\hypo{ x : \theta }
			\infer3[]{ \preG \texttt{let $x = M$ in $N$} : \theta' }
		\end{prooftree}
 }       \\
		%=====Letrec bindings=====
{		\begin{prooftree}
			\hypo{ M : \theta' }
			\hypo{  N : \theta'' }
			\hypo{ x : \theta \to \theta' }
            \hypo{ y : \theta }
			\infer4[]{ \preG \texttt{letrec $x = \lambda y . M$ in $N$} : \theta'' }
		\end{prooftree}
}		
		%=====Function Application Vars=====
{		\begin{prooftree}
			\hypo{ x: \theta \rightarrow \theta' }
			\hypo{ \preG M : \theta }
			\infer2[]{ \preG x~M : \theta' }
		\end{prooftree}
}		
		%=====Function Application Meths=====
{		\begin{prooftree}
			\hypo{ m \in \meths_{\theta \rightarrow \theta'} }
			\hypo{ \preG M : \theta }
			\infer2[]{ \preG m~M : \theta' }
		\end{prooftree}
}		
	\end{tabular}
	\caption{Grammar and typing rules for \textsc{HORef}}
	\label{fig:terms}	
\end{figure*}
%%% Local Variables:
%%% mode: latex
%%% TeX-master: "../main"
%%% End:

Here we present a higher-order language with (higher-order) state, which we call \textsc{HORef}, as an idealised setting for languages like \textsc{Java} and \textsc{OCaml}. The syntax consists of a call-by-value $\lambda$-calculus with types
\[
\theta\ ::=\ \atype{unit}\mid\atype{int}\mid\theta\times\theta\mid\theta\to\theta
  \]
and references of arbitrary types.
\cutout{
we write
\begin{itemize}
 \item $x^\theta$ for a variable $x$ of type $\theta$; 
 \item $i$ for some integer; 
 \item $m$ for some method name; 
 \item $()$ for the unit or skip command;
 \item $\oplus$ for any binary operation on integers;
 \item $\pi_1$ and $\pi_2$ for the left and right projection of a pair;
 \item and $r$ for some reference. 
 \end{itemize}}%
We assume countable disjoint sets \vars, \refs\ and \meths, for \emph{variables}, \emph{references} and \emph{methods} respectively.
Variables are ranged over by $x$ and variants; references by $r$ and variants; and methods by $m$ and variants.
We assume these sets are typed, that is:
\begin{align*}
&\vars=\biguplus\nolimits_\theta\vars_\theta, \
\refs=\biguplus\nolimits_\theta\refs_\theta, \\
&\meths=\biguplus\nolimits_{\theta,\theta'}\meths_{\theta\to\theta'}.
\end{align*}
The syntax and typing rules are given in Figure~\ref{fig:terms}.
Note that we assume a set of arithmetic operators $\oplus$, which we leave unspecified as they do not affect the analysis.
We extend the syntax with usual constructs for sequencing and assertions: $M;N$ stands for $\texttt{let $\_=M$ {in} $N$}$;
while $r{++}$ is $r:={!r}+1$;
and $\texttt{assert}(M)$ is $\texttt{if $M$ then () else $\fail$}$ (with boolean values represented by $0,1$).

Note that the use of typed variables allows us to type terms without need for typing contexts. 
As usual, a variable occurrence is \emph{free} if it is not in the scope of a
matching
($\lambda$/$\texttt{let}$/$\texttt{letrec}$)-binder. Terms are considered modulo $\alpha$-equivalence and in particular we may assume that no variable occurs both as free and bound in the same term.
We call a term \emph{closed} if it contains no free variables.

References in our language are global, and there is no fresh reference creation construct (this choice made for simplicity). On the other hand, methods are dynamically created in terms, and for that reason we will be frequently referring to them as \boldemph{names}. The terminology comes from nominal techniques~\cite{nom1,nom2}.
On a related note, $\lambda$-abstractions are not values in our language. This is due to the fact that in the semantics these get evaluated to method names.

\cutout{
We do not have arbitrary term to term application. Method application in our syntax is restricted to names and variables applied to terms. This maintains expressiveness, however, as one can use let-bindings to chain method application. We also see in Figure \ref{fig:terms} the typing rules for the terms. }

In our approach, checking for violations of safety properties is reduced to the reachability of failure. We have therefore included the $\fail$ primitive for when a program
reaches a failure.
Accordingly, our bounded model checking routine will return $\fail$ when it 
aborts on reaching a failure and $\nil$ when it aborts on reaching the bound.
The use of $\nil$ is analogous to the {unwinding assertions} used in \textsc{Cbmc}. It is not part of the syntax of \textsc{HORef}.

\cutout{
Note that we do not have references themselves as values in our syntax. Instead, we only have handles to references. As such, creating new references is not possible, i.e. we do not allow syntax such as \texttt{let r = ref 0}. Instead, the size of the store is fixed, since all references used must exist. In practice, this simply means that all references must be declared beforehand.}

\paragraph{Bounded Operational Semantics}

We next present a bounded operational semantics for \textsc{HORef}, which is the one captured by our bounded BMC routine.
The semantics is parameterised by a bound $k$ which, similarly to loop unwinding in procedural languages, it bounds the depth of method (i.e.\ function) calls within an execution. A bound $k=0$
in particular means that, unless no method calls are made, execution will terminate returning $\nil$. 
Consequently, in this bounded operational semantics, all programs must halt; either when the program itself halts (returning a value or $\fail$), or when the bound is exceeded.
Note at this point that the standard (unbounded) semantics of \textsc{HORef}, allowing arbitrary recursion, can be obtained e.g.\ by allowing bound values $k=\infty$. 

To describe this behaviour, we chose a big-step operational semantics representation with rules of the form
\[(M,R,S,k)\Downarrow(\bigchi,R',S')\]
where
$\bigchi \in (\vals \cup \{\fail,\nil\})$. In other words, all terms must eventually evaluate to a value, $\fail$ or $\nil$.
A \boldemph{configuration} is a quadruple $(M,R,S,k)$ where $M$ is a typed term and:
\begin{itemize}
  \item $R:\meths\rightharpoonup\termsM$ is a finite map, called a \boldemph{method repository}, such that for all $m\in\dom(R)$, if $m\in\meths_{\theta\to\theta'}$ then $R(m)=\lambda x.M:\theta\to\theta'$.
  \item $S:\refs\rightharpoonup\vals$ is a finite map, called a \boldemph{store}, such that
    for all $r\in\dom(S)$, if $r\in\refs_{\theta}$ then $S(r):\theta$.
    \item $k\in\{\nil\}\cup\mathbb{N}$ is the nested calling bound, where decrementing $k$ beyond zero results in $\nil$.
    \end{itemize}
    A closed configuration is one all of whose components are closed.
    We call a configuration $(M,R,S,k)$ \boldemph{valid} if
    all methods and references appearing in $M,R,S$ are included in $\dom(R)$ and $\dom(S)$ respectively. A closed configuration is one all of whose components are closed.

\begin{definition}
The operational semantics is defined on closed valid configurations, by the rules given in Figure~\ref{fig:semantics}.
\end{definition}

\begin{figure*}
	\centering
	\setlength{\tabcolsep}{2pt}
	\renewcommand{\arraystretch}{1.5}
	\begin{tabular}{RL@{\qquad}RL@{\qquad}RL}
         %=========================================== BASE
		(\Downarrow_\nil)& (M,R,S,\nil)\Downarrow(\nil,R,S)	
		&
		(\Downarrow_\text{\fail}) &(\fail,R,S,k)\Downarrow(\fail,R,S)\\
		(\Downarrow_\text{val}) & (v,R,S,k)\Downarrow(v,R,S)
		&
		(\Downarrow_\text{drf}) & ({!r},R,S,k)\Downarrow(S(r),R,S)\\
          (\Downarrow_\lambda)& 
                    \multicolumn{3}{L}{(\lambda x.M,R,S)\Downarrow(m,R[m\mapsto \lambda x.M],S)\text{ where $m\notin\dom(R)$}}
        \end{tabular}
        \renewcommand{\arraystretch}{3}
        \begin{tabular}{RLRL}
          % ================= INDUCTIVE
    	(\Downarrow_{\pi_i})&
    	{\begin{prooftree}
    	\hypo{ (M,R,S,k)\Downarrow(\pair{v_1,v_2},R',S') }
    	\infer1[]{ (\pi_i\,M,R,S,k)\Downarrow(v_i,R',S') }
    	\end{prooftree}}\text{$i = 1,2$}
        \quad
	(\Downarrow_{:=}) ~
    	{\begin{prooftree}
    	\hypo{ (M,R,S,k)\Downarrow(v,R',S') }
    	\infer1[]{ (r:=M,R,S,k)\Downarrow((),R',S'[r \mapsto v]) }
    	\end{prooftree}}
    	\\
    	(\Downarrow_\oplus)&
    	{\begin{prooftree}
    	\hypo{ (M_1,R,S,k)\Downarrow(i_1,R_1,S_1) &  (M_2,R_1,S_1,k)\Downarrow(i_2,R_2,S_2) }
    	\infer1[]{ (M_1 \oplus M_2,R,S,k)\Downarrow(i,R_2,S_2)\quad (\text{where }i=i_1 \oplus i_2) }
    	\end{prooftree}}
    	\\
    	(\Downarrow_\times)&
    	{\begin{prooftree}
    	\hypo{ (M_1,R,S,k)\Downarrow(v_1,R_1,S_1) & (M_2,R_1,S_1,k)\Downarrow(v_2,R_2,S_2) }
    	\infer1[]{ (\pair{M_1,M_2},R,S,k)\Downarrow(\pair{v_1,v_2},R_2,S_2) }
      \end{prooftree}}
    	\\ %==================== LINE
    	(\Downarrow_\text{let})&
    	{\begin{prooftree}
   		\hypo{ (M,R,S,k)\Downarrow(v,R',S') & (N\{v/x\},R',S',k)\Downarrow(v',R'',S'') }
   		\infer1[]
   		{ (\texttt{let $x=M$ in $N$},R,S,k)\Downarrow(v',R'',S'') }
              \end{prooftree}}
    	\\
    	(\Downarrow_{@})&
    	{\begin{prooftree}
		\hypo{ (M,R,S,k)\Downarrow(v,R',S') \;\; (N\{v/x\},R',S',k{-}1)\Downarrow(v',R'',S'') }
		\infer1[]{ (m\,M,R,S,k)\Downarrow(v',R'',S'') \quad(\text{where } R(m)=\lambda x.N)}
              \end{prooftree}}
   		\\ %==================== LINE
   		(\Downarrow_{\text{if}})&
        {\begin{prooftree}
        \hypo{ (M,R,S,k)\Downarrow(i,R',S') & (M_j,R',S',k)\Downarrow(v_j,R_j,S_j) }
        \infer1[]{ (\texttt{if $M$ then $M_1$ else $M_0$},R,S,k)\Downarrow(v_j,R_j,S_j)}
        \end{prooftree}}
           \\
          (\Downarrow_\text{rec})&
    	{\begin{prooftree}
   		\hypo{ (N[m/f],R[m\mapsto\lambda x.M\{m/f\}],S,k)\Downarrow(v,R',S') }
   		\infer1[]
   		{ (\texttt{letrec $f=\lambda x.M$ in $N$},R,S,k)\Downarrow(v',R',S') }
    	\end{prooftree}}
\\
    	(\not\Downarrow_{\pi_i})&
    	{\begin{prooftree}
    	\hypo{ (M,R,S,k)\Downarrow(\bigchi,R',S') }
    	\infer1[]{ (\pi_i\,M,R,S,k)\Downarrow(\bigchi,R',S') }
    	\end{prooftree}}
    	\text{$i = 1,2$}
        \quad
    	(\not\Downarrow_{:=})~
    	{\begin{prooftree}
    	\hypo{ (M,R,S,k)\Downarrow(\bigchi,R',S') }
    	\infer1[]{ (r:=M,R,S,k)\Downarrow(\bigchi,R',S') }
    	\end{prooftree}}
        \\
        (\not\Downarrow_{\oplus_1})&
    	{\begin{prooftree}
        \hypo{ (M_1,R,S,k)\Downarrow(\bigchi,R_1,S_1) }
    	\infer1[]{ (M_1 \oplus M_2,R,S,k)\Downarrow(\bigchi,R_1,S_1) }
        \end{prooftree}}
        \quad
        (\not\Downarrow_{\oplus_2})~
    	{\begin{prooftree}
        \hypo{ (M_1,R,S,k)\Downarrow(i_1,R_1,S_1) & (M_2,R_1,S_1,k)\Downarrow(\bigchi,R_2,S_2) }
    	\infer1[]{ (M_1 \oplus M_2,R,S,k)\Downarrow(\bigchi,R_2,S_2) }
        \end{prooftree}}
        \end{tabular}
    	% \textbf{Abort Rules ($k\neq\nil$ and $\bigchi \in \{\fail,\nil\}$):}
        % === ERROR
    	\begin{tabular}{RLRL}
    	(\not\Downarrow_{\times1})&
    	{\begin{prooftree}
    	\hypo{ (M_1,R,S,k)\Downarrow(\bigchi,R_1,S_1) }
    	\infer1[]{ (\pair{M_1,M_2},R,S,k)\Downarrow(\bigchi,R_1,S_1) }
    	\end{prooftree}}            
        \quad
    	(\not\Downarrow_{\times2})~
        {\begin{prooftree}
    	\hypo{ (M_1,R,S,k)\Downarrow(v_1,R_1,S_1) & (M_2,R_1,S_1,k)\Downarrow(\bigchi,R_2,S_2) }
    	\infer1[]{ (\pair{M_1,M_2},R,S,k)\Downarrow(\bigchi,R_2,S_2) }
      \end{prooftree}}
\\
    	(\not\Downarrow_{\text{let}_1})&
    	{\begin{prooftree}
    	\hypo{ (M,R,S,k)\Downarrow(\bigchi,R',S') }
    	\infer1[]{ (\texttt{let $x=M$ in $M'$},R,S,k)\Downarrow(\bigchi,R',S') }
    	\end{prooftree}}
\\
              	(\not\Downarrow_{\text{let}_2})
&    	{\begin{prooftree}
    	\hypo{ (M,R,S,k)\Downarrow(v',R',S') & (N\{v/x\},R',S',k)\Downarrow(\bigchi,R'',S'') }
    	\infer1[]{ (\texttt{let $x=M$ in $M'$},R,S,k)\Downarrow(\bigchi,R'',S'') }
    	\end{prooftree}}
\\
    	(\not\Downarrow_{@_1})& 
    	{\begin{prooftree}
    	\hypo{ (M,R,S,k)\Downarrow(\bigchi,R',S') }
    	\infer1[]{ (m\,M,R,S,k)\Downarrow(\bigchi,R',S')}
    	\end{prooftree}}
\quad
    	(\not\Downarrow_{@_2})~
    	{\begin{prooftree}
    	\hypo{(M,R,S,k)\Downarrow(v,R',S') \;\; (N\{v/x\},R',S',k{-}1)\Downarrow(\bigchi,R'',S'') }
    	\infer1[]{ (m\,M,R,S,k)\Downarrow(\bigchi,R'',S'') \quad
        (\text{where $R(m)=\lambda x.N$})}
    	\end{prooftree}}
\\
(\not\Downarrow_{\text{if}_1})
    	&{\begin{prooftree}
    	\hypo{ (M,R,S,k)\Downarrow(\bigchi,R',S') }
    	\infer1[]{ (\texttt{if $M$ then $M_1$ else $M_0$},R,S,k)\Downarrow(\bigchi,R',S')}
    	\end{prooftree}}
\\
        	(\not\Downarrow_{\text{if}_2})&
            {\begin{prooftree}
        	\hypo{ (M,R,S,k)\Downarrow(i,R',S') & (M_j,R',S',k)\Downarrow(\bigchi,R_j,S_j) }
        	\infer1[]{ (\texttt{if $M$ then $M_1$ else $M_0$},R,S,k)\Downarrow(\bigchi,R_j,S_j)}
            \end{prooftree}}
            \\
              	(\not\Downarrow_\text{rec})&
    	{\begin{prooftree}
   		\hypo{ (N[m/f],R[m\mapsto\lambda x.M\{m/f\},S,k)\Downarrow(\bigchi,R',S')  }
   		\infer1[]
   		{ (\texttt{letrec $f=\lambda x.M$ in $N$},R,S,k)\Downarrow(\bigchi,R',S') }
              \end{prooftree}}
	\end{tabular}
    \begin{tabular}{RLRL}

    \end{tabular}
	\caption{Bounded operational semantics rules. In all cases, $k\neq\nil$, $\bigchi\in\{\fail,\nil\}$, and $j=0$ if $i=0$, and $j=1$ otherwise.}
	\label{fig:semantics}
\end{figure*}
%%% Local Variables:
%%% mode: latex
%%% TeX-master: "../main"
%%% End:

\paragraph{Nominal determinacy}

While the operational semantics is bounded in depth, the reduction tree of a given term can still be infinite because of the non-determinacy involved in evaluating $\lambda$-abstractions (rule $\Downarrow_\lambda$): the rule non-deterministically creates a fresh name $m$ and extends the repository with $m$ mapped to the given $\lambda$-abstraction. This kind of non-determinism, which can be seen as \emph{determinism up to  fresh name creation}, is formalised below.

Let us consider permutations $\pi:\meths\to\meths$ such that, for all $m$, if $m\in\meths_{\theta\to\theta'}$ then
$\pi(m)\in\meths_{\theta\to\theta'}$.
We call such a permutation $\pi$ \emph{finite} if the set $\{a\mid\pi(a)\neq a\}$ is finite. 
Given a syntactic object $X$ (e.g.\ a term, repository, or store) and a finite permutation 
$\pi$, we write $\pi\cdot X$ for the object we obtain from $X$ if we swap each name $a$ appearing in it with $\pi(a)$. Put otherwise, the operation $\cdot$ is an action from finite permutations of $\meths$ to the set of objects $X$. Given a set $\Delta\subseteq\meths$
and objects $X,X'$, we write $X\sim_\Delta X'$ whenever there exists a finite permutation $\pi$ such that:
\[
\pi\cdot X=X' \land \forall a\in\Delta.\ \pi(a)=a
  \]
and say that  $X$ and $X'$ or \emph{nominally equivalent} up to $\Delta$.

\begin{lemma}\label{lem:NomDet}
Given $(M,R,S,k)\Downarrow(\bigchi,R',S')$, for all 
$(\bigchi',R'',S'')$ we have \\
$(M,R,S,k)\Downarrow(\bigchi',R'',S'')$ 
iff
$(\bigchi,R',S')\sim_{\dom(\Delta)}(\bigchi',R'',S'')$.
\end{lemma}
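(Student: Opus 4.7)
The plan is to prove the lemma by induction on the first derivation, leveraging a preliminary equivariance property. Note that the only source of non-determinism in the operational semantics is rule $\Downarrow_\lambda$, which picks a name $m\notin\dom(R)$; all other rules are syntax-directed and deterministic up to their premises. Reading $\dom(\Delta)$ in the conclusion as $\dom(R)$, the lemma asserts that the choice of freshly introduced names is the sole source of variation.

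First I would establish an \emph{equivariance lemma}: for every finite permutation $\pi$ on $\meths$ respecting types, if $(M,R,S,k)\Downarrow(\bigchi,R',S')$ then $\pi\cdot(M,R,S,k)\Downarrow\pi\cdot(\bigchi,R',S')$. This is a routine induction on the derivation, using that each semantic rule only inspects names via equality and domain membership, both of which are stable under permutation, and that substitution commutes with the $\pi$-action. Equivariance immediately yields the right-to-left implication: given $(M,R,S,k)\Downarrow(\bigchi,R',S')$ and a witnessing $\pi$ fixing $\dom(R)$ with $\pi\cdot(\bigchi,R',S')=(\bigchi',R'',S'')$, apply $\pi$ throughout the first derivation and use $\pi\cdot(M,R,S,k)=(M,R,S,k)$ (since $M,R,S$ only mention names in $\dom(R)$ by validity) to obtain the second derivation.

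For the left-to-right direction I would induct on the structure of the first derivation. Every rule except $\Downarrow_\lambda$ has the property that, given the same inputs, the outputs are determined by the outputs of the premises. In the base case $\Downarrow_\lambda$ picks some $m\notin\dom(R)$ in the first derivation and some $m'\notin\dom(R)$ in the second; the transposition $(m\;m')$ fixes $\dom(R)$ and relates the two outputs. For compound rules with sequential premises (e.g.\ $\Downarrow_\text{let}$, $\Downarrow_@$, $\Downarrow_\oplus$, $\Downarrow_\times$, $\Downarrow_{\text{if}}$), the inductive hypothesis on the first premise gives a permutation $\pi_1$ fixing $\dom(R)$; one then applies equivariance to transport the second premise of the second derivation along $\pi_1^{-1}$ so that both second-premise derivations start from the same configuration, at which point the inductive hypothesis produces a $\pi_2$ fixing $\dom(R_1)\supseteq\dom(R)$. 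The composition $\pi_2\circ\pi_1$ witnesses $\sim_{\dom(R)}$ for the overall outputs.

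The main obstacle is precisely this composition step: one must take care that the permutations obtained from successive premises act on compatible sets of names, since each premise enlarges the repository with fresh names. The key bookkeeping invariant is that, after each premise, the permutation produced by the inductive hypothesis fixes all names present in the repository at the start of \emph{that} premise; since subsequent premises only introduce names outside the current repository, the composed permutation still fixes the original $\dom(R)$. Once this is set up cleanly, the rules $\Downarrow_\text{rec}$, $\Downarrow_\text{val}$, $\Downarrow_\text{drf}$, and the failure-propagating rules $(\not\Downarrow_\cdot)$ are immediate.
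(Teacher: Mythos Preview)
The paper does not actually supply a proof of this lemma: it is stated in the main text and invoked in the proof of Theorem~\ref{thm:sound}, but no argument is given either there or in the appendix (which only contains proofs of Lemmas~\ref{lem:1}--\ref{lem:4}). So there is nothing to compare against directly.

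That said, your proposal is the standard and correct way to establish nominal determinacy for a semantics of this shape. The equivariance lemma is routine, and your handling of the left-to-right direction---transporting the second derivation's later premises along $\pi_1^{-1}$ so that the inductive hypothesis applies, then composing the resulting permutations---is exactly the right bookkeeping. Two small points worth making explicit when you write it up: first, in the right-to-left direction you rely on $\pi\cdot(M,R,S,k)=(M,R,S,k)$, which follows from validity of the configuration (all names in $M,R,S$ lie in $\dom(R)$), so state that explicitly; second, in the compositional step you need $\pi_1^{-1}\cdot M_2=M_2$ (and similarly for the body $N$ in $\Downarrow_@$, etc.), which again uses that the subterm only mentions names already in $\dom(R)$. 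Once those are spelled out, the argument goes through without difficulty.
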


To illustrate bounding of method application and the use of names in place of methods, we provide the following example.
\begin{example}\label{ex:rec}
	Consider the following recursive higher-order program of \textsc{HORef} (with some syntactic sugar) which we shall unwind with a bound $k=2$.
\begin{lstlisting}[language=lambda]
  r := 0;
  letrec f = lambda x. if x then (r$++$;  f (x - 1))
                 else (lambda y. assert (y = !r + x))
  in	
  let g = f 5 in g 5
\end{lstlisting}
Let us write the above program as $r:=0;M_1$, and $M_1$ as $\texttt{letrec $f=\lambda x.M_2$ in $M_3$}$.
We can attempt to evaluate
$(r:=0;M_1,\varnothing,\varnothing,2)$ as follows. Below we let  $S_i=\{r\mapsto i\}$, and
 $N_i= \lambda y.\,\texttt{assert}(y = {!r} + i)$.
\begin{enumerate}
\item First, $(r:=0,\varnothing,\varnothing,2)\Downarrow((),\varnothing,S_0,2)$,
\item next, evaluate
  $(\texttt{letrec $f=\lambda x.M_2$ in $M_3$},\varnothing,S_0,2)$,
\item i.e.\ $(M_3\{m_1/f\},R,S_0,2)$, with $R=\{m_1 \mapsto \lambda x.M_2\{m_1/f\}\}$,
\item i.e.\ $(\texttt{let $g=m_1\,5$ in $g\, 5$},R,S_0,2)$,
\item now, first evaluate $(m_1\,5,R,S_0,2)$,
\item i.e.\ $(\texttt{if $5$ then $(r{+}{+};m_1(5-1))$ else $N_5$},R,S_0,1)$,
\item i.e.\ $(r{+}{+};m_1(5-1),R,S_0,1)$,
\item i.e.\ $(m_14,R,S_1,1)$,
\item i.e.\ $(\texttt{if $4$ then $(r{+}{+};m_1(4-1))$ else $N_4$},R,S_1,0)$,
\item i.e.\ $(r{+}{+};m_1(4-1),R,S_1,0)$,
\item i.e.\ $(m_13,R,S_2,0)$, 
\item i.e.\ $(\texttt{if $3$ then $(r{+}{+};m_1(3-1))$ else $N_3$},R,S_2,\nil)$,
  \\ which returns $\nil$.
\end{enumerate}%
\cutout{
\begin{xtabular}{R L}
&(r:=0;M_1,\varnothing,\varnothing,2)\\
\hline
&(r:=0,\varnothing,\varnothing,2)\Downarrow((),\varnothing,\{r \mapsto 1\},2)\\
\hline
M_1:&(\texttt{letrec $f=\lambda x.M_2$ in $M_3$},\varnothing,\{r \mapsto 1\},2)\\
&\Downarrow(M_3\{m_1/f\},\{m_1 \mapsto M_2\},\{r \mapsto 1\},2)\\
M3: &\Downarrow(\texttt{let $g=m_1\,5$ in $M_4$},\{m_1 \mapsto M_2\{m_1/f\},...)\\
\hline
&(m_1\,5,\{m_1 \mapsto M_2\},\{r \mapsto 1\},2)\\
&\Downarrow(M_2\{5/x\},\{m_1 \mapsto M_2\},\{r \mapsto 1\},1)\\
M_2: &\Downarrow(\texttt{if $5$ then $(r++;m_1(5-1))$ else $M_5$},...,1)\\
\hline
&(r++;m_1\,4,...,\{r \mapsto 1\},1)\\
&\Downarrow(m_1\,4,...,\{r \mapsto 2\},1)\\
&\Downarrow(M_2\{4/x\},...,\{r \mapsto 2\},0)\\
M_2: &\Downarrow(\texttt{if $4$ then $(r++;m_1(4-1))$ else $M_6$},...,0)\\
\hline
&(r++;m_1\,3,...,\{r \mapsto 2\},0)\\
&\Downarrow(m_1\,3,...,\{r \mapsto 3\},0)\\
&\Downarrow(M_2\{3/x\},...,\{r \mapsto 3\},\nil)\\
&\Downarrow(\nil,...,\{r \mapsto 3\},\nil)\\
\hline
M_6:&(\nil,...,\{r \mapsto 3\},\nil)\\
\hline
M_5:&(\nil,...,\{r \mapsto 3\},\nil)\\
\hline
M_4:&(\nil,...,\{r \mapsto 3\},\nil)\\
\end{xtabular}}%
Hence, the evaluation aborts with $\nil$.
The interesting part of the program is the assertion, which is not reached with this bound.
Setting the bound to 6, $m_1$ will eventually be called with 0 and return the function $N_0$. The latter will be bound to $g$ and called on 5, and at that point $r$ will have value 5, so the assertion will pass. Setting initially $r:=1$ would lead to failure.

% In this case, since $k=2$ is too small, we reach the bound and the semantics aborts.
Intuitively, the bounded semantics is equivalent to a bounded inlining of methods. As such, evaluating the example with $k=2$ can be seen as unwinding the program as follows (where we have also included the function definitions for clarity).
\begin{lstlisting}[language=LAMBDA]
  r := 0;
  letrec f =
    lambda x. if x then (r$++$;  f (x - 1))
         else (lambda y. assert(y = !r + x))
  in
  let g = if 5 then ( r$++$;
                      if 4 then (r$++$; nil)
                      else (lambda y. assert(y = !r + 4)) )
          else (lambda y. assert(y = !r + 5))
  in if 5 then ( r$++$;
                 if 4 then (r$++$; nil) 
                 else assert(5 = !r + 4) )
     else assert(5 = !r + 5)
\end{lstlisting}\vspace{-2mm}
We will come back to this example in the next section.
\end{example}

%%% Local Variables:
%%% mode: latex
%%% TeX-master: "main"
%%% End:

\section{A Bounded Translation for \textsc{HORef}}\label{sec:trans}

We present an algorithm which, given a term $M$ and a bound $k$,
produces a propositional formula which captures the bounded semantics of $M$, for the bound $k$. More precisely, 
the algorithm receives a valid configuration $(M,R,S,k)$ as input, where $M$ may only contain free variables of ground type, and produces a formula $\phi$ and a variable $\ret$. Then, for any substitution $\sigma_0$ closing the configuration, and any corresponding formula $\ofSig{\sigma_0}$, $(M,R,S,k)\{\sigma_0\}$ reaches some $\chi\in\vals\cup\{\fail,\nil\}$ iff
\[
(\phi\land\ofSig{\sigma_0})\implies (\ret=\chi)
\]
is satisfiable. The formal statement and proof of the above is given in Theorem~\ref{thm:sound}. It is slightly more elaborate as it takes into account the possibly different choices of fresh method names in the translation and evaluation.

The translation operates on intermediate symbolic configurations, of the form
$(M,R,C,D,\phi,k)$, where:
\begin{itemize}
\item $C,D : \refs \rightharpoonup \ssarefs$
  are static single assignment (SSA) maps where \ssarefs~is the set of SSA variables of the form $r_i$ such that $i$ is the number of times $r$ has been assigned to so far.
The map $C$ is counting all the assignments that have taken place so far in the translation, whereas $D$ only counts those in the current path. E.g.\
$C(r)=r_5$ if $r$ has been assigned to five times so far.
We write $C[r]$ to mean \emph{update $C$ with reference $r$}: if $C(r)=r_i$, then $C[r] = C[r \mapsto r_{i+1}]$, where $r_{i+1}$ is fresh.
	\item $\phi$ is a propositional formula containing the behaviour of the current path so far.
\end{itemize}
Moreover, $R$ is a repository storing all methods created so far, and $k$ is the bound.
The translation returns tuples of the form $(\ret,\phi,R,C,D)$,
where $\phi,R,C,D$ have the same interpretation, albeit for after reaching the end of all  paths for the term $M$.
The variable $\ret$ represents the return value of the initial configuration.

The algorithm uses a fresh-name generator, which is left unspecified but  deterministically produces the next fresh name, or variable, or SSA variable of appropriate type.
Following the SSA approach, the variables $\ret$ in particular are always chosen fresh, so that each $\ret$ identifies a unique evaluation point in the translation.
We use SSA form because it allows us reason about assignment as equations. We compute the SSA form on the fly by substituting all free variables with their corresponding $\ret$ at binding, and through the use of SSA-maps $C$ and $D$ for references. 

\cutout{
The goal of this algorithm is to translate some term $M$ into a formula $\phi$ which captures the behaviour of $M$ up to $k$. This means that the trace produced by evaluating $M$ with $k$-bounded method applications should be included in $\phi$. As with other BMC translations, symbolic approaches are more efficient than evaluating the term, and is why we can call $\phi$ a symbolic trace. Additionally, fresh names are created in some canonical way, by which we mean that the translation follows some deterministic procedure to generate fresh names. Here, the benefits of $\nil$ becomes clearer: it avoids spurious errors and allows full verification by checking if the bound is reachable.}

We now describe the translation. The translation stops when either the bound $\nil$, a $\fail$, or a value $v$ has been reached. The base cases add clauses mapping return variables to actual values of evaluating $M$.

Inductive cases build the symbolic trace of $M$ by recording in $\phi$ all changes to the store, and the return values ($\ret$) at each step in the computation tree. These steps are then chained together using the guard $F$:\vspace{-.5mm}
\begin{align*}
F~a~b~\phi =~ 
&((a=\fail)\Rightarrow(b=\fail)) \land ((a=\nil)\Rightarrow(b=\nil))\\
&{}\land((a=\fail)\lor(a=\nil)\lor \phi)\vspace{-.5mm}
\end{align*}
which propagates $\nil$ and $\fail$, and the SSA maps $C,D$.

The difference between reading ($D$) and writing ($C$) is noticeable when branching.
There are two branching cases here: the conditional case, and the one for application $xM$.
In the former one, we branch according to the return value of the condition (denoted by $ret_b$), and each branch translates $M_0$ and $M_1$ respectively. 
In this case, both branches read from the same map $D_b$, but may contain different assignments, which we accumulate in $C$. The formula $\psi_0\land\psi_1$ encodes a binary decision tree with guarded clauses that represent the path guards.
% Guarded clauses are written as implications, which are allowed in modern SMT solvers such as Z3~\cite{DBLP:conf/tacas/MouraB08}.

When applying variables as methods ($xM$, with $x:\theta$), we encode in $\psi$ an $n$-ary decision tree where $n$ is the number of methods to consider. This is necessary since
the algorithm is symbolic and therefore agnostic to what $x$ is pointing at. In such cases, we assume non-determinism, meaning that $x$ could be any method in the repository $R$ restricted to type $\theta$ (denoted $R \upharpoonright \theta$). We call this case \boldemph{non-deterministic method application}. This case seems to be fundamental for applying BMC to higher-order terms, and higher-order references. It is made possible by the introduction of names for methods, as it allows for comparison of higher-order terms as values.
% Since method names are always fresh, and we do not rebind names to other method bodies, it is possible use constants as names.
Non-deterministic method application is a primary source of scalability problems, however, and will be discussed in more detail later.

The BMC translation is given as follows. It transforms each symbolic configuration $(M,R,C,D,k)$ to $\sem{M,R,C,D,k}$.
In all of the cases below, $\ret$ is a fresh variable and $k\not=\nil$. We also assume a common domain $\varPi = dom(C) = dom(D)$, which is the finite subset of {\refs} containing all references that appear in $M$ and $R$.
\smallskip

\noindent
\textbf{Base Cases:}
{\setlength{\leftmargini}{0pt}
\begin{itemize}
	%~~~~~ k = NIL ~~~~~
	\item \tnil
	%~~~~~ M = FAIL ~~~~~
	\item \tfail
	%~~~~~ M = v ~~~~~
	\item \tval
	%~~~~~ M = !r ~~~~~
	\item \tderef
	%~~~~~ M = \x.M ~~~~~
	\item \tlambda
\end{itemize}

\noindent
\textbf{Inductive Cases:}
\begin{itemize}
	%~~~~~ M = pi_i M ~~~~~
	\item \tpi
	%~~~~~ M = M1 + M2 ~~~~~
	\item \tplus
	%~~~~~ M = <M1,M2> ~~~~~
	\item \tpair
	%~~~~~ M = let x = M in M' ~~~~~
	\item \tletin
    %~~~~~ M = letrec x = \x.M in M' ~~~~~
    \item \tletrec
	%~~~~~ M = m M ~~~~~
	\item \tapplym
	%~~~~~ M = if Mb then M1 else M0 ~~~~~	
	\item \tifthenelse
	%~~~~~ M = x M ~~~~~
	\item \tapplyx
          	%~~~~~ M = r := M ~~~~~
	\item \tassign
\end{itemize}}
%%% Local Variables:
%%% mode: latex
%%% TeX-master: "../main"
%%% End:

To illustrate the algorithm, we look at two characteristic cases. In $\sem{\letin{x=M}\,M',R,C,D,\phi,k}$, we first compute the translation of $M$. Using the results of said translation, we can substitute in $M'$ the fresh variable $\ret_1$ for $x$, and compute its translation. To finish, we return $\ret$, chain it to $\ret_2$ using predicate $F$ in $\phi_2$, and return the remaining results of translating $M'$.

In
$\sem{xM,R,C,D,\phi,k}$ we see 
non-deterministic method application in action. We first translate the argument $M$ and obtain $(\ret_0,\phi_0,R_0,C_0,D_0)$. We then restrict the repository $R$ to type $\theta$ to obtain the set of names identifying all methods of matching type for $x$. If no such methods exist, this means that the binding of $x$ had not succeeded (because of $\fail/\nil$) and we are examining a dead branch, so we immediately return. Otherwise, for each method $m_i$ in this set, we obtain the translation of applying $m_i$ to the argument $\ret_0$. This is done by substituting $\ret_0$ for $y_i$ in the body of $m_i$. After translating all method applications, all paths are joined in $\psi$, as described earlier, by constructing an $n$-ary decision tree that includes the state of the store in each path. We do this by incrementing all references in $C_n$, and adding the clauses $C_n' = D_i(r)$ for each path. These paths are then guarded by the clauses $(x = m_i)$. Finally, we return a formula that propagates $\nil$ and $\fail$ in case $\ret_0$ reaches either of them. Note that we return $C_n'$ as both the $C$ and $D$ resulting from translating this term. This is because all branches have been joined, and any term sequenced after this one should have all updates available to it.

\cutout{
A proof of soundness of the translation is provided in the next section. This theorem states that, given a possibly open configuration, the translation must agree with the operational semantics under permutation of names--meaning that, by determinacy of the operational semantics, the results should be equivalent given the choice of fresh names--and vice versa. This stems from two properties of the translation: correctness and uniqueness. Uniqueness states that any given closed configuration should produce a formula satisfiable only with a unique assignment for the result, while correctness states that said assignment must be equivalent to the evaluation of the corresponding operational semantics.
}

We now come back to Example~\ref{ex:rec} to illustrate the intuition of SSA and return variables, non-deterministic method application, and formula construction.
\begin{example} Consider Example~\ref{ex:rec} modified with free variables $n$ and $r_0$.
\begin{lstlisting}[language=lambda]
  r := r0;
  letrec f = lambda x. if x then (r$++$;  f (x - 1))
                  else (lambda y. assert (y = !r + x))
  in	
  let g = f n in g n
\end{lstlisting}
 We transform it to produce the program in SSA form with non-deterministic method application at line 11, again unwinding with $k=2$. Note that all assignments have been replaced with let-bindings. This is because, in SSA form, we think of references as SSA variables. In addition, we use keyword \texttt{new} to add new for names to the repository.
\begin{lstlisting}[language=LAMBDA,numbers=left,xleftmargin=1em]
let r1 = r0 in
letrec m1 =
  lambda x. if x then (r$++$;  m1 (x-1))
       else (lambda y. assert(y = !r + x)) 
in
let ret3 =
  if n then (let r2 = r1 + 1 in
    if n-1 then (let r3 = r2 + 1 in nil)
    else (new m3 = lambda y. assert(y = !r+n-1) in m3))
  else (new m2 = lambda y. assert(y = !r + n) in m2)
in match ret3 with
| nil rarr nil
| m3  rarr assert(n = r3 + n-1)
| m2  rarr assert(n = r3 + n)
\end{lstlisting}
We can then build model $\phi$ for the example. For economy, we hide the nuances of propagating $\nil$ and $\fail$ in predicate $F_{a,b}P$, which is short-hand for $F~\ret_a~\ret_b~P$. We also omit $F$ wherever no $\fail$ or $\nil$ appears in the term, and directly return constants instead of translating them. To construct the formula, we traverse the term in order, and add clauses in order of traversal. Note that the ``else" branch is always explored first in conditionals.
\begin{flalign*}
\phi = & F_{2,1}(\ret_1 = \ret_2)\land(r_1=r_0)&&\text{(line 1)}\\
       & {}\land F_{9,2}(\ret_2 = \ret_9)&&\text{(line 6)}\\
       & {}\land F_{4,3}(F_{5,3}(\ret = (n=0)?\ret_4:\ret_5))&&\text{(line 7)}\\
       & {}\land (\ret_4 = m_2)&&\text{(line 10)}\\
       & {}\land F_{6,5}(\ret_5 = \ret_6)\land(r_2=r_1+1)&&\text{(line 7)}\\
       & {}\land F_{7,6}(F_{8,6}(\ret_6 = ((n-1)=0)?\ret_7:\ret_8))&&\text{(line 8)}\\
       & {}\land (\ret_7 = m_3)&&\text{(line 9)}\\
       & {}\land F_{9,8}(\ret_8 = \ret_9)\land(r_3=r_2+1)&&\text{(line 8)}\\
       & {}\land F(\ret_9 = \nil)&&\text{(line 8)}\\
       & {}\land ((\ret = m_3) \Rightarrow (\ret_{10} = \ret_{11}))&&\text{(line 13)}\\
       & {}\land ((\ret_{11} = ((n=r_3+n-1)=0)?\fail:())&&\text{(line 13)}\\
       & {}\land ((\ret = m_2) \Rightarrow (\ret_{10} = \ret_{13}))&&\text{(line 14)}\\
       & {}\land (\ret_{13} = ((n=r_3+n)=0)?\fail:())&&\text{(line 14)}
\end{flalign*}

In this case, if we set $r_0 = 0$, recalling $k=2$, then $\phi \land (\ret_1=\nil)$ is satisfiable with a minimum $n=2$, since we need at least $k=n+1$ iterations to reach $n=0$ (which is also the case for a negative $n$, as the program
would diverge). With $r_0=0$, however, we cannot violate the assertion, i.e. $\phi \land (\ret_1 = \fail)$ is not satisfiable. Setting $r_0 = 1$, on the other hand, causes $\phi \land (\ret_1=\fail)$ to be satisfiable with $n=0$ or $n=1$.
\end{example}

\paragraph{Bounded Model Checking with the Translation}
The steps to do a $k$-bounded model checking of some configuration $(M,R,S,k)$ using the bounded translation algorithm described previously are as follows:
\begin{enumerate}
	\item Build the initial axioms/preconditions:
    \\$\phi_0 = \bigwedge_{r \in S}(r = S(r))$.
	\item Build the initial SSA maps: 
    \\$C_0 = \{r \mapsto r_0 \mid r \in dom(S)\}$.
	\item Compute the translation: 
    \\$\llbracket M_0,R,C_0,C_0,\phi_0,k \rrbracket = (\ret,\phi,R',C,D)$.
	\item This is where the expressiveness of $\fail$ and $\nil$ becomes relevant. To check for:
	\begin{enumerate}
		\item sound errors: 
        \\$\phi' = (\ret = \fail) \land \phi$
		\item reached bounds (for verification): 
        \\$\phi' = (\ret = \nil) \land \phi$
		\item a specific return property $p$: 
        \\$\phi' = \neg p \land \phi$, e.g. $\phi' = \neg (\ret > 5) \land \phi$
		\item a specific store property $p_r$: 
        \\$\phi' = \neg p_r \land \phi$, e.g. $\phi' = \neg (D(r) > 5) \land \phi$
	\end{enumerate}
	\item Transform $\phi'$ to the relevant SMT solver format (e.g. SMT-Lib), and use the SMT solver to get a satisfying assignment.
    \item When checking for $\fail$, if the formula is unsatisfiable, we can increase the bound given checking for $\nil$ is satisfiable. If $\nil$ is not satisfiable either, then the program has been verified.
\end{enumerate}
Note that checks for store properties (d) can be combined with any of the properties mentioned in step (6), including other store properties. It is only possible to check other properties (a,b, and c) independent of each other, however. This is because the translation is deterministic and will always output a unique result. For instance, the return value cannot be both $\fail$ and $\nil$ in the same satisfying assignment, i.e. the formula $\phi \land (\ret = \nil) \land (\ret = \fail)$ is unsatisfiable. Moreover, while the semantics requires closed terms, the translation is indifferent towards free variables. As such, it will handle top-level input arguments and said free variables by simply adding them into the formula, which will produce a unique return for each valid assignment of the input arguments. This is, in fact, one of the most useful applications of BMC, since it then generates counter-examples from said input arguments. These free variables, however, must be of ground type. The translation will not mind if a free variable is given a higher-order type. But then the resulting formula becomes unsound, since we do not model unknown program code. The simplest solution is to make the formula always unsatisfiable to avoid spurious errors. Handling open terms with higher-order free variables will be discussed in more detail as future work.

%%% Local Variables:
%%% mode: latex
%%% TeX-master: "main"
%%% End:

\section{Soundness of the BMC translation}

In this section we prove that our BMC algorithm is sound for input terms that are closed or contain open variables of ground type.

We start off with some definitions.
%
%============== DEFINITIONS ==============
%\begin{definition}\label{def:includes}\textbf{Assignment inclusion.}
An \boldemph{assignment} $\sigma:\vars\rightharpoonup\cvals$
is a finite map from variables to closed values. 
%we say that $\sigma'$ \emph{includes} $\sigma$ if $\sigma\subseteq\sigma'$.
%\end{definition}
%
%\begin{definition}\label{def:preserves}\textbf{Assignment and formula representation.}
Given a term $M$, we write $M\{\sigma\}$ for the term obtained by applying $\sigma$ to $M$. On the other hand, applying $\sigma$ to a method repository $R$, we obtain the repository $R\{\sigma\}=\{m\mapsto R(m)\{\sigma\}\mid m\in\dom(R)\}$ -- and similarly for stores $S$.
Then, given a valid configuration $(M,R,S,k)$, we have
$(M,R,S,k)\{\sigma\}=(M\{\sigma\},R\{\sigma\},S\{\sigma\},k)$.

Given a formula $\psi$ and an assignment $\sigma$, we say $\sigma$ \boldemph{represents}  $\psi$, and write  $\sigma \preserves \psi$, if:
\begin{itemize}
		\item $\sigma$ satisfies $\psi$ (written $\sigma \vDash \psi$);
		\item $\psi$ implies $\sigma$: $\forall x \in dom(\sigma).\ \psi \implies x = \sigma(x)$.
	\end{itemize}
        Given assignment $\sigma$, we define a formula $\ofSig{\sigma}$ representing it by:
        $\ofSig{\sigma} = \bigwedge_{x \in \dom(\sigma)}(x = \sigma(x))$.
        % \end{definition}
%\begin{definition}\label{def:name:equiv}\textbf{Equivalence under permutation of names.}

%\end{definition}

Given a valid configuration  $(M,R,S,k)$, let us set:
\begin{align*}
C_S &= \{ r \mapsto r_0 \mid r \in dom(S)\} &
  \phi_S &= \bigwedge\nolimits_{r \in dom(S)}(r_0 = S(r))
\end{align*}
and define: \
$\sem{ M,R,S,k } = 
\sem{ M,R,C_S,C_S,\phi_S,k }$.

\begin{theorem}[Soundness]\label{theorem:sound:open}\label{thm:sound}
%============== THEOREM ==============
Given a valid configuration  $(M,R,S,k)$ whose open variables are of ground type, suppose $\sem{ M,R,S,k } = (\ret,\phi,R'',C,D)$.
Then, for all assignments $\sigma_0$ closing $(M,R,S,k)$, the following are equivalent:
\begin{enumerate}
\item $(M,R,S,k)\{\sigma_0\} \Downarrow (\bigchi', R', S', k')$
\item $\exists\bigchi \sim_{dom(R)} \bigchi'.\ (\phi \land \ofSig{\sigma_0}) \implies (\ret = \bigchi)$.
  \end{enumerate}
  \end{theorem}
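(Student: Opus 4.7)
My overall strategy would be to prove a strengthened statement by well-founded induction on the lexicographic pair $(k, |M|)$, ranging over intermediate symbolic configurations $\sem{M,R,C,D,\phi,k}$ rather than the top-level form $\sem{M,R,S,k}$. The generalised invariant needs to link three layers: (i) the operational configuration $(M\{\sigma\},R\{\sigma\},S,k)$, with $S$ reconstructed by evaluating $D$ under the current model; (ii) the symbolic state $(R,C,D,\phi)$; and (iii) an assignment $\sigma$ extending $\sigma_0$ to all SSA variables and fresh returns introduced by the translation. Concretely, I would prove: whenever $\sigma \preserves \phi$ and $S = D \circ \sigma$ on $\varPi$, then $(M\{\sigma\},R\{\sigma\},S,k)\Downarrow(\chi',R',S',k)$ iff there exists an extension $\sigma' \supseteq \sigma$ with $\sigma' \preserves \phi'$, $D' \circ \sigma' = S'$, and $\sigma'(\ret) \sim_{\dom(R)} \chi'$, where $\sem{M,R,C,D,\phi,k}=(\ret,\phi',R'',C',D')$. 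The top-level statement then follows by instantiating at $C=D=C_S$, $\phi=\phi_S$, and noting that $\sigma_0 \preserves \phi_S$ whenever it closes the configuration.

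The inductive argument proceeds by case analysis on $M$. The base cases ($\nil$, $\fail$, values, $!r$, $\lambda x.M$) are direct from the definitions of $F$ and of $\sem{\cdot}$: for $\lambda$-abstractions in particular, both the semantics and the translation allocate a fresh method name, and the required nominal equivalence is witnessed by the evident transposition. The straightforward compositional cases ($\oplus$, $\pair{\cdot,\cdot}$, $\pi_i$, $r{:=}M$, $\letin{x=M}M'$, $m\,M$) chain two uses of the IH, exploiting that $F~\ret_1~\ret~P$ behaves identically whether $\ret_1$ is $\nil$/$\fail$ (propagated to $\ret$) or a value (reducing to $P$); here $\letin{}$ uses IH with the substitution $M'\{\ret_1/x\}$, and $m\,M$ decrements $k$ exactly as in rule $\Downarrow_@$. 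The $\texttt{letrec}$ case is treated by observing that the translation and semantics both introduce a fresh method name and reduce to the corresponding unfolded term.

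The two genuinely subtle cases are the conditional and the non-deterministic method application $x^\theta M$. For the conditional, both branches are translated eagerly, but $\psi_0 \land \psi_1$ is guarded by the value of $\ret_b$, and the SSA map is updated to $C' = C_1[r_1]\cdots[r_n]$ so that the branch not taken contributes no constraint on its writes while the taken branch forces $C'(r) = D_1(r)$; the inductive step requires showing that extending $\sigma$ so that unused SSA variables take arbitrary values preserves both $\preserves$ and $D' \circ \sigma' = S'$ on the taken side. For $xM$, the semantics reduces $x\{\sigma\}=m_i$ for some unique $m_i \in R{\upharpoonright}\theta$ and then evaluates the body of $m_i$ applied to $\ret_0$; the translation pre-computes this body for every $m_i \in R{\upharpoonright}\theta$ and guards it by $(x=m_i)$. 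The IH is applied to each branch $N_i\{\ret_0/y_i\}$ at bound $k-1$, and the correct one is selected by $\sigma(x)$. This is also where nominal equivalence enters in force: because the translation allocates fresh method names in each branch independently, the $\chi$ produced by the formula may differ in name choice from the $\chi'$ produced by the semantics by a permutation fixing $\dom(R)$.

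The main obstacle, and the point at which I would spend most care, is managing the bookkeeping of SSA maps across branch-joining: showing that $C_n' = C_n[r_1]\cdots[r_j]$ together with the conjuncts $\bigwedge_{r\in\varPi}(C_n'(r)=D_i(r))$ correctly yields a post-state store equal to the semantic $S''$ of the branch actually chosen, while leaving the non-chosen branches' SSA writes logically inert. A supporting lemma I would state separately is that translation preserves $\varPi=\dom(C)=\dom(D)$ and only ever extends $R$, $C$, $D$ monotonically, and a second lemma asserting that for any $\sigma$ on ground variables, the formula $\phi'$ functionally determines $\ret$ modulo nominal equivalence of freshly allocated methods---this is essentially the uniqueness half of the biconditional and factors the backward direction cleanly through the forward one.
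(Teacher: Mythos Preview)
Your proposal is correct and follows the same inductive skeleton as the paper, but the decomposition differs in one notable respect. The paper does \emph{not} carry nominal equivalence or a biconditional through the induction. Instead it proves a one-directional Correctness Lemma (Lemma~\ref{lem:1}): assuming $\sigma\preserves\phi$ and the configuration terminates, the semantics evaluates to some $\chi$ and $\phi'\implies(\ret=\chi)$ \emph{on the nose}, with $\hat R\subseteq R'\{\sigma'\}$ and $\hat S=D'\{\sigma'\}$ when $\chi\notin\{\fail,\nil\}$. Fresh-name alignment is handled inside the $\lambda$ case by simply choosing the semantic name to coincide with the one the translation picked. The biconditional and the $\sim_{\dom(R)}$ relation are then discharged \emph{only at the theorem level}, by a single appeal to Nominal Determinacy (Lemma~\ref{lem:NomDet}) in each direction. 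Your plan instead threads a permutation and an iff through every inductive step; this works, but the paper's separation keeps the induction lighter.

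Your supporting lemmas match the paper's infrastructure closely: the monotonicity of $R$ is Lemma~\ref{lem:3}; the ``extending $\sigma$ over unused SSA variables'' step in the branching cases is exactly the content of the paper's Uniqueness Lemma (Lemma~\ref{lem:2}), which is factored out and invoked repeatedly inside the proof of Lemma~\ref{lem:1} for the non-taken branches of \texttt{if} and $xM$; and the observation that the initial $\phi$ propagates as a conjunct is Lemma~\ref{lem:4}, used at the top level to pass from $\sem{M,R,S,k}$ to $\sem{M,R,C_S,C_S,\phi_S\land\ofSig{\sigma_0},k}$. One small point: the paper's induction is on the size of the semantic derivation together with structural induction on $M$, rather than on $(k,|M|)$; both orderings terminate here, but the derivation-size measure is what directly justifies the recursive call on $N\{\ret_1/x\}$ in the $mM$ case.
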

%============== PROOF ==============
\begin{proof}
Take $\sigma_S=\{r_0\mapsto S(r)\{\sigma_0\}\mid r\in\dom(S)\}$ and $\sigma_0'=\sigma_0\cup\sigma_S$. By construction then, $\phi_S\land\ofSig{\sigma_0}\cong\sigma_0'$. 
Let us set $\phi'=\phi\land\ofSig{\sigma_0}$.
From the assumption and the fact that the translation propagates the formula $\ofSig{\sigma_0}$ from the initial condition (Lemma~\ref{lem:4}), we have:
\[
\sem{ M,R,C_S,C_S,\phi_S\land\ofSig{\sigma_0},k }
=(\ret,\phi',R'',C,D)
  \]
Moreover,  $(M,R,C_S,k)\{\sigma_0'\} = (M,R,S,k)\{\sigma_0\}$ is valid and closed so, by Lemma~\ref{lem:1}, we have
that  $(M,R,S,k)\{\sigma_0\} \Downarrow (\hat \bigchi, \hat R,\hat S,\hat k)$\
and $\exists \sigma_1 \supseteq \sigma_0'.\ (\sigma_1 \preserves \phi')\land\phi' \implies (\ret = \hat{\bigchi})$.

Suppose now~$(1)$ holds. By Lemma~\ref{lem:NomDet}, we have that
  $\bigchi'\sim_{dom(R)} \hat\bigchi$, so  taking $\bigchi=\hat\bigchi$ we obtain~(2).

  On the other hand, if~$(2)$ holds then
  $\phi'$ implies $\ret=\bigchi$ and $\ret=\hat\bigchi$.
  Since $\phi'\cong\sigma_1$, we get $\bigchi= \sigma_1(\ret) =\hat\bigchi$. Hence,
  $\bigchi'\sim_{\dom(R)}\hat\bigchi$ and we conclude using Lemma~\ref{lem:NomDet}.
	\end{proof}

Theorem~\ref{thm:sound} uses the following main lemma, which is shown in the Appendix.
Below we assume that
\[
\sigma,\sigma':(\vars\cup\ssarefs)\rightharpoonup\cvals\cup\{\fail,\nil\}
\]
 are \emph{extended assignments}. Accordingly,
an \emph{extended term}
is a term that may contain $\nil$, $\nil\,M$ or $\fail\,M$ as a subterm (extended terms are closed under extended assignments). Extended configurations are defined in a similar manner, and we use the same operational semantics rule to evaluate them. In particular, extended configurations may contain extended terms with free variables, or $\fail/\nil$, in evaluating position. We do not add special rules for those\,--\,they get stuck.

\begin{lemma}[Correctness]\label{lem:1}
  Given $M,R,C,D,k,\phi,\sigma$ such that $\sigma\preserves\phi$ and\\
 $(M,R,D,k)\{\sigma\}$ is terminating,
 if $\llbracket M,R,C,D,\phi,k \rrbracket =(\ret,\phi',R',C',D')$
then:
\begin{itemize}
	\item $(M,R,D,k)\{\sigma\} \Downarrow (\bigchi,\hat{R},\hat S,\hat{k})$ and $\exists \sigma' \supseteq \sigma. (\sigma' \preserves \phi')$ and $\phi' \implies (\ret = \bigchi)$, and
	\item if \emph{$\bigchi \notin \{\fail,\nil\}$}, then
          $\hat R \subseteq R'\{\sigma'\}$ and $\hat S = D'\{\sigma'\}$.
\end{itemize}
\end{lemma}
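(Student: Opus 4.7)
The plan is to prove Lemma~\ref{lem:1} by induction on the operational-semantics derivation of $(M,R,D,k)\{\sigma\} \Downarrow (\bigchi,\hat R,\hat S,\hat k)$, which exists by the termination hypothesis. Each translation case corresponds to a set of operational rules, and for each we must establish the existence of an extended assignment $\sigma' \supseteq \sigma$ with $\sigma' \preserves \phi'$ and $\phi' \implies (\ret = \bigchi)$, together with the repository and store inclusions when $\bigchi \notin \{\fail,\nil\}$. Throughout, we freely use Lemma~\ref{lem:NomDet} to align the deterministic fresh names generated by the translation with the nominally non-deterministic choices in rule~$(\Downarrow_\lambda)$.

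The base cases ($\nil$, $\fail$, values $v$, $!r$, $\lambda x.M$) are routine: the translation produces $(\ret = c) \land \phi$ for an appropriate constant $c$ (the value, or $D(r)$, or a fresh method name), and we extend $\sigma$ to $\sigma'$ by mapping $\ret$ to $c$. In the $\lambda$ case the extension also registers $m \mapsto \lambda x.M$ in $R'$, matching the semantics up to nominal equivalence.

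For the inductive constructors without branching ($\oplus$, $\langle\cdot,\cdot\rangle$, $\pi_i$, assignment, $\texttt{let}$, $\texttt{letrec}$, and method application $mM$), we apply the IH sequentially to each subterm. The key observation is that each use of the IH yields an enlarged $\sigma'$ that continues to represent the accumulated formula, so the hypotheses remain valid for the next subterm. In the substitution-based cases ($\texttt{let}$ and $mM$), the IH on the first subterm gives $\sigma'(\ret_1) = v$, so the substituted continuation $N\{\ret_1/x\}$ under $\sigma'$ agrees syntactically with $N\{v/x\}$, which is exactly what the semantics evaluates; the guard predicate $F$ used in $\phi'$ propagates $\fail/\nil$ correctly because $\sigma'(\ret_1)$ is either one of those or a value.

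The main obstacles are the two branching cases, because the translation processes paths that the semantics does not take. For the conditional, both branches always terminate in the bounded semantics, so the IH applies to both; conjuncts stemming from the non-executed branch are rendered harmless by the guards $(\ret_b = 0) \Rightarrow \psi_0$ and $(\ret_b \neq 0) \Rightarrow \psi_1$, which become vacuously true on the wrong side. The SSA bump $C' = C_1[r_1]\cdots[r_n]$ introduces fresh SSA variables onto which we set $\sigma'$ to match $\sigma'(D_1(r))$ on the taken branch, securing $\hat S = C'\{\sigma'\}$. For variable application $xM$, since $\sigma(x) = m_j$ for a unique $j \in R \upharpoonright \theta$, only the $j$-th implication $(x = m_j) \Rightarrow \cdots$ is active and pins down $\ret$; the remaining translations of $N_i\{\ret_0/y_i\}$ are all obtained by IH (all terminate by boundedness) and contribute non-conflicting conjuncts that $\sigma'$ can satisfy by choosing arbitrary values for the unused $\ret_i$ and fresh SSA variables, with the final SSA bump and the active guard forcing $\sigma'(C_n'(r)) = \sigma'(D_j(r))$ so that $\hat S = C_n'\{\sigma'\}$.
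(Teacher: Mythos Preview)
Your overall structure is sound, but there is a genuine gap in the two branching cases. You induct on the operational-semantics derivation and then claim, for the conditional and for $xM$, that ``both branches always terminate in the bounded semantics, so the IH applies to both'' (and similarly ``the remaining translations of $N_i\{\ret_0/y_i\}$ are all obtained by IH''). This does not follow: the induction hypothesis is available only for \emph{sub}derivations of the derivation at hand. In the semantics, only the taken branch $M_j$ (resp.\ the single $N_j$ with $\sigma(x)=m_j$) is evaluated; the other branches have no subderivation, and even if they do terminate, their derivations may be strictly larger than the one you are inducting on. Yet you need assignment extensions for those untaken branches, because the translation threads $\phi$ and $R$ through every branch in sequence: e.g.\ $M_1$ is translated with input $\phi_0$ and repository $R_0$ produced by translating $M_0$, so to invoke the IH on $M_1$ you first need some $\sigma_0\supseteq\sigma_b$ with $\sigma_0\preserves\phi_0$.

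The paper closes this gap with a separate result, Lemma~\ref{lem:2} (Uniqueness of the translation): whenever $\sigma\preserves\phi$ and $\llbracket M,R,C,D,\phi,k\rrbracket=(\ret,\phi',R',C',D')$, there exists $\sigma'\supseteq\sigma$ with $\sigma'\preserves\phi'$. This is proved by an independent induction on the translation and needs \emph{no} semantic termination hypothesis; it is exactly what lets one pass through the untaken branches and obtain $\sigma_0\preserves\phi_0$ (and the chain $\sigma_1,\dots,\sigma_{j-1},\sigma_{j+1},\dots,\sigma_n$ in the $xM$ case) without the main IH. The paper also relies on repository preservation (Lemma~\ref{lem:3}, $R'\supseteq R$) to justify a weakening step when the IH is finally applied to the taken branch, whose translation starts from the enlarged repository accumulated by the earlier untaken branches. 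Your sketch mentions neither ingredient; without them, the induction as stated does not go through.
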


%%% Local Variables:
%%% mode: latex
%%% TeX-master: "main"
%%% End:

\section{A Points-to Analysis for Names}\label{sec:points-to}

The presence of non-deterministic method application in our BMC translation is a primary source of combinatorial explosion of the algorithm. As such, a more precise filtering of $R$ is necessary for scalability. 
In this section we describe an optimisation on non-deterministic method application inspired by points-to analysis. Points-to analysis
provides an overapproximation
of the \emph{points-to set} of each variable inside a program, that is, the set of locations that it may point to. %\footnote{As such, our analysis also resembles to Control-Flow Analysis (CFA)~\cite{CFA}, albeit in a setting where function bodies are represented by method names, which can be passed around and stored.}
Here instead we devise an analysis that overapproximates the set of methods that may be bound to each variable in the bounded unfolding of a program in SSA form. This way, we can reduce the branching caused by non-deterministic method application in the BMC translation.

Traditionally, the problem of which method to apply per method application is one that CFA~\cite{DBLP:books/daglib/0098888,DBLP:conf/pldi/Shivers88,DBLP:journals/csur/Midtgaard12} answers. In our setting, however, methods are represented by names, which reduces the task of higher-order method application to keeping track of names used in a first-order term (and, additionally, names can be stored).
This allows us to address the same problem as CFA with a simpler points-to analysis for names. Note that this program analysis is not as expressive as full points-to analysis, in the sense that we do not need to consider pointers pointing at pointers.

Points-to analysis
algorithms often belong to one of two families: \textit{Steengaard-style}~\cite{DBLP:conf/popl/Steensgaard96} and \textit{Andersen-style}~\cite{Andersen:94:PhD}, also known as \textit{unification-based} and \textit{inclusion-based} flow-insensitive analyses respectively~\cite{DBLP:conf/paste/Hind01}.
These are typically constrain-based analyses whereby one goes through the code of a program and allocates constraints to each reference/variable assignment, and subsequently solves them in a global manner.
%The trade-of is between performance and precision. Steengaard-style analysis is often less precise, but has better performance.
In our case,
the BMC translation already performs a recursive analysis on terms, which we can use to make the points-to analysis more precise and local, while remaining efficient.

\cutout{
For example, let $pts(r)$ be the points-to set of $r$. The following are some cases and corresponding constraints for ANSI-C programs in Andersen-style analysis:
\begin{align*}
\texttt{p := \&x} \qquad & x \in pts(p)\\
\texttt{p := q}   \qquad & pts(p) \supseteq pts(q)\\
\texttt{*p := q}  \qquad & \forall r \in pts(p). pts(r) \supseteq pts(q)\\
\texttt{p := *q}  \qquad & \forall r \in pts(q). pts(p) \supseteq pts(r)
\end{align*}
We draw inspiration from points-to analysis to reduce the set of names to consider in non-deterministic method application. The idea is to compute points-to sets that tell us which names can be pointed at by a given variable or reference.
}

The analysis looks at references and variables, and assigns to them a set of method names that they may be referring/bound to.
This is done via a finite map
\[
  pt:(\refs\cup\vars) \rightharpoonup \pts
\]
where $\pts$ contains all \emph{points-to sets} and is given by:
\[
\pts\ \ni \ A\, ::=\ X \mid \pair{A,A} \quad (\text{where } X\subseteq_{fin}\meths).
\]
Thus, a points-to set is either a finite set of names or a pair of points-to sets.
In the BMC translation, 
points-to sets need to be created when a method name is created. Moreover, they need to be assigned to references or variables in the following cases:
\begin{align*}
&r:=M \quad && \text{add in $pt$: } r\mapsto pt(M)\\
&\letin{x=M}\,M'   \quad && \text{add in $pt$: }  x\mapsto pt(M)\\
&xM  \qquad && \text{add in $pt$: } \ret(M)\mapsto pt(M)
\end{align*}
where $\ret(M)$ is the variable assigned to the result of $M$.
The $\texttt{letrec}$ follows a similar logic.
The need to have sets of names, instead of single names, in the range of $pt$ is that the analysis, being symbolic, branches on conditionals and applications, so the method pointed to by a reference cannot be decided during the analysis. Thus, when joining after branching, we merge the $pt$ maps obtained from all branches. 

\cutout{
  In \textsc{HORef}, references are the only mutable construct, and for this optimisation we only care about names. This makes our ``points-to analysis" simpler than those for ANSI-C. In the translation we deal with higher-order terms using names, and use return variables $\ret$ to identify steps in computation. Since return variables are semantically indistinguishable from normal variables that appear in the term, all bound variables are substituted for their corresponding $\ret$. We thus need relations mapping $\ret$ variables and references to their points-to sets. }

The points-to algorithm is presented next. Given a valid configuration $(M,R,S,k)$, the algorithm returns $PT(M,R,S,k)=(\ret,A,R,pt)$, where $A$ is the points-to set of $\ret$, and $pt$ is the overall points-to map computed. 
The 
 union operator for two points-to sets of matching form is:
\begin{align*}
&A \cup B=
\begin{cases}
\pair{A_1 \cup B_1, A_2 \cup B_2} &\text{if }A,B = \pair{A_1,A_2},\pair{B_1,B_2}\\
A \cup B &\text{if }A,B \subseteq\meths
\end{cases}
\end{align*}
while the merge of points-to maps is given by:
$$
merge(pt_1,\dots,pt_n) = \{x \mapsto \bigcup\nolimits_i\hat{pt}_i \mid x\in\bigcup\nolimits_i\dom(pt_i)\}
$$
where $\hat{pt}_i(x)=pt_i(x)$ if $x\in dom(pt_i)$, and $\emptyset$ otherwise.
\smallskip

\noindent
\textbf{Base Cases:}
{\setlength{\leftmargini}{0pt}
  \begin{itemize}
    \item $PT(M,R,pt,nil) = (ret,\varnothing,R,pt)$
    \item $PT(fail,R,pt,k) = (ret,\varnothing,R,pt)$
    \item $PT(v,R,pt,k) = (ret,\varnothing,R,pt)$, where $v=i,()$
    \item $PT(m,R,pt,k) = (ret,\{m\},R,pt)$
    \item $PT(x,R,pt,k) = (ret,pt(x),R,pt)$
    \item $PT({!r},R,pt,k) = (ret,pt(r),R,pt)$
    \item $PT(\lambda x.M,R,pt,k) = (ret,\{m\},R[m \mapsto \lambda.M],pt)$
\end{itemize}
\smallskip

\noindent
\textbf{Inductive Cases:}
\begin{itemize}
	\item 
	$\begin{aligned}[t]
		& PT(\pi_i\,M,R,pt,k) =\\
		&\quad \begin{aligned}[t]
			& \letin{(\ret_1,A_1,R_1,pt_1) = PT(M,R,pt,k)}\\
			& (\ret,\pi_i\,A_1,R_1,pt_1)
		\end{aligned}
	\end{aligned}$
	
	\item 
	$\begin{aligned}[t]
		& PT(r:=M,R,pt,k) =\\
		&\quad \begin{aligned}[t]
			& \letin{(\ret_1,A_1,R_1,pt_1) = PT(M,R,pt,k)}\\
			& (\ret,\varnothing,R_1,pt_1[r \mapsto A_1])
		\end{aligned}
	\end{aligned}$
	
	\item 
	$\begin{aligned}[t]
		& PT(M_1 \oplus M_2,R,pt,k) =\\
		&\quad \begin{aligned}[t]
			& \letin{(\ret_1,A_1,R_1,pt_1) = PT( M_1,R,pt,k)}\\
			& \letin{(\ret_2,A_2,R_2,pt_2) = PT( M_2,R_1,pt_1,k)}\\
			& (\ret,\varnothing,R_2,pt_2)
		\end{aligned}
	\end{aligned}$

	\item 
	$\begin{aligned}[t]
		& PT(\pair{M_1,M_2},R,pt,k) =\\
		&\quad \begin{aligned}[t]
			& \letin{(\ret_1,A_1,R_1,pt_1) = PT( M_1,R,pt,k)}\\
			& \letin{(\ret_2,A_2,R_2,pt_2) = PT( M_2,R_1,pt_1,k)}\\
			& (\ret,\pair{A_1,A_2},R_2,pt_2)
		\end{aligned}
	\end{aligned}$

	\item 
	$\begin{aligned}[t]
		& PT(\texttt{let $x=M$ in $M'$},R,pt,k) =\\
		&\quad \begin{aligned}[t]
			& \letin{(\ret_1,A_1,R_1,pt_1) = PT(M,R,pt,k)}\\
               %   & \letin{pt' = \begin{cases} 
               %    pt_1[\ret_1 \mapsto A_1] & x:\theta\to\theta' \\
               %    pt_1[\ret_1 \mapsto A_1] & x:\theta\times\theta' \\
               %    pt_1 & \text{otherwise}\\
               % \end{cases}}\\
			& PT(M'\{\ret_1/x\},R_1,pt_1[\ret_1 \mapsto A_1],k)
		\end{aligned}
	\end{aligned}$

	\item 
	$\begin{aligned}[t]
		& PT( \texttt{letrec $f=\lambda x.M$ in $M'$},R,pt,k) =\\
		&\quad 
        \begin{aligned}[t]
			&\letin{\text{$m,f'$ be fresh}}\\
            & PT( M'\{f'/f\},R[m \mapsto \lambda x.M\{f'/f\}],
             pt[f' \mapsto \{m\}],k)
		\end{aligned}
	\end{aligned}$

	\item 
	$\begin{aligned}[t]
		& PT( m\,M,R,pt,k) =\\
		&\quad \begin{aligned}[t]
			& \letin{(\ret_1,A_1,R_1,pt_1) = PT(M,R,pt,k)}\\
			& \letin{R(m)\text{ be }\lambda x.N}\\
            & PT(N\{\ret_1/x\},R_1,pt_1[\ret_1 \mapsto A_1],k)
		\end{aligned}
	\end{aligned}$	
	
	\item
	$\begin{aligned}[t]
		& PT( \texttt{if $M_b$ then $M_1$ else $M_0$},R,pt,k) =\\
		&\quad \begin{aligned}[t]
			& \letin{(\ret_b,A_b,R_b,pt_b) = PT(M_b,R,pt,k)}\\
			& \letin{(\ret_0,A_0,R_0,pt_0) = PT(M_0,R_b,pt_b,k)}\\
			& \letin{(\ret_1,A_1,R_1,pt_1) = PT(M_1,R_0,pt_b,k)}\\
			& (\ret,A_0 \cup A_1,R_1,merge(pt_0,pt_1))
		\end{aligned}
	\end{aligned}$
	
	\item
	$\begin{aligned}[t]
		& PT( x^\theta\,M,R,pt,k) =\\
		&\quad \begin{aligned}[t]
			& \letin{(\ret_0,A_0,R_0,pt_0) = PT( M,R,pt,k)}\\
			& \letin{pt(x)\text{ be }\{m_1,...,m_n\}}\\
                        & \text{if $n=0$ then $(\ret,\emptyset,R_0,pt_0)$ else:}\\
			& \letin{pt_0' = pt_0[\ret_0 \mapsto A_0]}
			\ \text{for each }i \in \{1,...,n\}:\\
			&\quad 
			\begin{aligned}[t]
				& \letin{R(m_i)\text{ be }\lambda y_i.N}\\
				& \letin{(\ret_i,A_i,R_i,pt_i) = 
                 PT(N_i\{\ret_0/y_i\},R_{i-1},pt_0',k)}\\
			\end{aligned}\\
			& (\ret,A_1 \cup ... \cup A_n,R_n,merge(pt_1, \dots, pt_n))
		\end{aligned}
	\end{aligned}$
      \end{itemize}}
\smallskip

\paragraph{The optimised BMC translation}
We can now incorporate the points-to analysis in the BMC translation to get an optimised translation which operates on symbolic configurations augmented with a points-to map, and returns:
\[
\sem{M,R,C,D,pt,\phi,k}_{PT} = (\ret,\phi',R',C',D',A,pt')
\]
The optimised BMC translation is defined by lock-stepping the two algorithms presented above (i.e.\ $\sem{\_}$ and $PT(\_)$) and let $\sem{\_}$ be informed from $PT(\_)$ in the $xM$ case, which now restricts the choices of names for $x$ to the set $pt(x)$.
We give the full algorithm in Appendix~\ref{apx:full}.
Its soundness is proven along the same lines as the basic algorithm.

To illustrate the significance of reducing the set of names, we provide a simple example.

\begin{example}
	Consider the following program which recursively generates names to compute triangular numbers.
	\begin{lstlisting}[language=Lambda]
  letrec f = lambda x. 
    if x leq 0 then 0
    else let g = (lambda y.x + y) in g (f (x-1)) 
  in
  letrec f' = lambda x.if x leq 0 then 0 else x + (f' (x-1)) 
  in assert(f n = f' n)
	\end{lstlisting}
    Without points-to analysis, since $f$ creates a new method, and the translation considers all methods of matching type per recursive call, the number of names to apply at depth $m\leq n$ when translating $f(n)$ is approximately ${m!}$. This means that the number of paths explored grows by the factorial of $n$, with the total number of methods created being the left factorial sum ${!n}$, and total number of names considered being the derangement of $n$. In contrast, $f'(n)$ only considers $n$ names with a linear growth in number of paths. With points-to analysis, the number of names considered and created in $f$ is reduced to that of $f'$.
\end{example}

\paragraph{Other optimisations}

There are other notable deficiencies in our translation. The first one involves our SSA transformation when branching. Particularly, this focuses on the join operations performed, since these add several clauses analogous to $\phi$ functions in conventional SSA. In our approach, the store is updated after branching, which serves as the join. This joining step is not very efficient as it adds a guarded clause per reference per branch. For this, including insights from standard SSA transformations may improve our translation. For example, one naive way to improve performance (and decrease the size of the model) may be to accumulate all changes, so we know exactly which references to update. A more precise way would be to use dominators: an efficient dominance algorithm would tell us which references may have been updated in order to reach some point in a program. Finally, we can make use of Data-Flow Analysis--in particular, Liveness Analysis--to compute whether references are live or not. This could reduce the number of join operations since we do not need to add clauses for dead references. This can even be further expanded into dead code or dead store elimination, which are useful standard optimisations in general.

The second deficiency is repository redundancy, which occurs when adding new names into the method repository. One can imagine a program that creates multiple copies of the same method, adding a new name to the repository each time. Ideally, if a method is already present in the repository, we should not create a new name. To address this, we can search the repository for a structurally equivalent method when attempting to add a new name to it. This looks for the existence of a name with an $\alpha$-equivalent method body which we can use instead of creating a new unnecessary name. One can even augment this solution by using extensionality instead of structural equality.

Finally, there is the minor problem of unnecessary propagation clauses in $F$. For instance, the translation guards every $\ret$ with a predicate $F$, creating many unnecessary clauses. Given we must traverse the term, we should know whether $\fail$ or $\nil$  are reachable in a term, which allows us to prune many of the propagating clauses in $F$. To do this, we can return a variable $q$ in a four-valued logic for set $Q = \{0,Nil,Fail,Both\}$, where $q \in Q$ is an overapproximation for the reachability of $\fail$ and $\nil$ in a given term. In branching, two variables $q_0,q_1 \in Q$ can be combined by the commutative operation $q_0 + q_1$, which follows the equalities: $q + 0 = q$, $q + q = q$, and $Nil + Fail = Both$, for any $q \in Q$. With this, we add in $F$ the guards corresponding to the $q$ returned. Specifically, if $q=Fail$, we only add the clauses that propagate $\fail$, and similarly for $q=Nil$. For $q=Both$, we add clauses for both $\nil$ and $\fail$, while adding no propagating clauses when $q=0$.

%%% Local Variables:
%%% mode: latex
%%% TeX-master: "main"
%%% End:

\section{Implementation and Experiments}\label{sec:tool}

We implemented the translation algorithm in a prototype tool to model check higher-order programs called \textsc{BMC-2}\cite{hobmc}. The tool takes program source code written in an \textsc{ML}-like language based on \textsc{OCaml}, and produces a propositional formula in \textsc{SMT-LIB 2} format. This can then be fed to an SMT solver such as \textsc{Z3}. Syntax of the input language is based on the subset of \textsc{OCaml} that corresponds to \textsc{HORef}. Differences between \textsc{OCaml} and our concrete syntax are for ease of parsing and lack of type checking. For instance, all input programs must be either written in ``Barendregt Convention'', meaning all bound variables must be fresh, or such that variables have the same type globally. Additionally, all bound variables are annotated with types, as is left and right projection. Internally, \textsc{BMC-2} implements an abstract syntax that extends \textsc{HORef} with vector arguments. This means that functions can take multiple arguments at once. Intuitively, this is equivalent to adding let-bindings that apply each argument individually. We also implemented the optimisation to avoid unnecessary propagation clauses, as previously described. The tool itself is written in \textsc{OCaml}.

To illustrate our input language, following is a sample program \texttt{mc91-e} from the \textsc{MoCHi} benchmark in \textsc{OCaml}. The keyword \texttt{Methods} is used to define all methods in the repository. The keyword \texttt{Main} is used to define the main method.
{\small\begin{lstlisting}[language=ML]
 Methods:
 mc91 (x:Int) :(Int) =
   if x >= 101 then
     x + -10
   else
     mc91 (mc91 (x + 11));
 
 Main (n:Int) :(Unit):
   if n <= 102 
     then assert ((mc91 n) == 91)
     else skip
\end{lstlisting}}
For this sample program, our tool builds a translation with $k=1$ for which \textsc{Z3} correctly reports that \texttt{fail} is reachable if $n=102$. Details about experiments will be provided later.

We tested our algorithm on 20 sample programs selected from the \textsc{MoCHi} benchmark\cite{DBLP:conf/pepm/SatoUK13}. The programs were translated to our input language and checked using our tool and \textsc{Z3}. Care was taken to keep all sample programs as close to the original source code as our concrete syntax allows. All experiments ran on a machine equipped with an Intel Core i7-6700 CPU clocked at 3.40GHz and 16GB RAM.
% running \textsc{CentOS Linux 7}. Our implementation was compiled using the \textsc{OCaml} native code compiler version 4.02.3.
All tests were set to time-out at 10 seconds, and up to a maximum bound $k=10$. These limits were chosen due to the combinatorial nature of model checking and the sample programs used. Since an increase in the bound increases the state space exponentially, results beyond 10 seconds did not seem as interesting. \textsc{BMC-2} ran three times per program per bound.
Firstly, we measure performance of the base algorithm in terms of total time spent checking each sample program. Secondly, we measure performance of the translation with added points-to analysis for names and the difference it makes. Finally, in addition to run time, if an error exists in a sample program, the lowest bound needed to find a counterexample is recorded.
%All tests were timed using the \textsc{Unix} \texttt{time} command. It should be noted that due to noise from other processes and the operating system itself, millisecond accuracy might be dubious. Nonetheless, all measurements were recorded to their full reported accuracy.

\paragraph{Results and Observations}
Figure~\ref{fig:result:base} plots the average time taken for \textsc{BMC-2} to check the \textsc{MoCHi} benchmark programs. Table~\ref{table:pointsto} records the percentage difference in average time taken per bound between \textsc{BMC-2} without points-to analysis and \textsc{BMC-2} with points-to analysis. From these, one can see a dramatic improvement in scalability of programs with non-deterministic method application. For example, two previously infeasible programs \texttt{hrec} and \texttt{hors} timed out at $k=3$ and $k=4$ respectively. With points-to analysis, \texttt{hrec} times out at $k=9$, while \texttt{hors} does not time out. In fact, \texttt{hors} can be checked within the allocated time for bounds upwards of $k=200$.

While extending \textsc{BMC-2} with points-to analysis has some overhead, the overhead appears to be minimal or even negligible for this benchmark. The addition of points-to analysis only negatively affected programs with non-deterministic method application at lower bounds, while leaving programs without said method application unaffected. This effect can be seen on Table~\ref{table:pointsto}, which shows a maximum increase in average execution time of $2.4\%$ for $k=1$. On average, however, execution time decreased by $55.8\%$, with a maximum decrease of $96.8\%$ for $k=3$. Only $k=1$ was negatively affected.

We can also observe that performance of the \textsc{BMC-2} heavily depends on the program it is checking. This makes the possibility of full verification entirely dependent on the nature of the program. For example, \texttt{ack}, which is an implementation of the Ackermann function, is a deeply recursive program, and thus cannot be translated by our algorithm any better than its normal growth. This agrees with the intuition that BMC is not appropriate to find bugs in deep recursion. As mentioned before, however, BMC has been shown empirically effective on shallow bugs in industry. To show this, Table~\ref{table:pointsto} records the minimum bound required for \textsc{BMC-2} to find a counterexample. We can observe that all bugs were shallow; occurring within $k=2$. We can thus say that BMC is a very inexpensive technique to find bugs in this benchmark.

\begin{figure}
	\includegraphics[width=\linewidth]{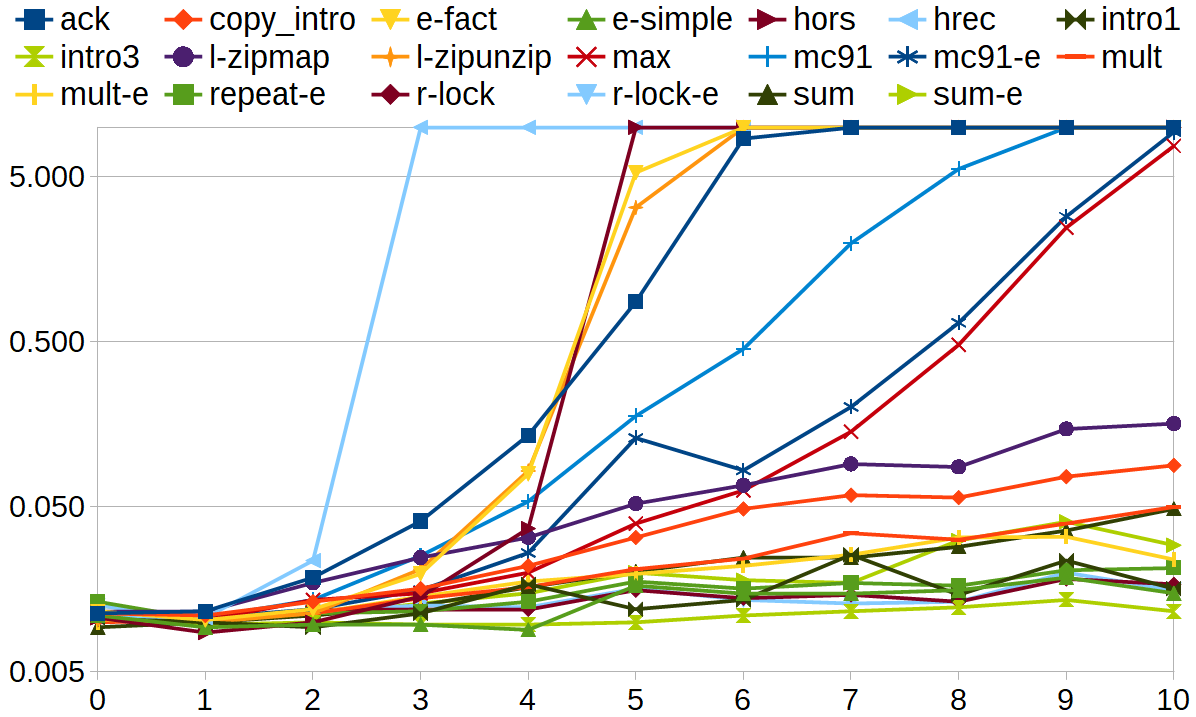}
	\includegraphics[width=\linewidth]{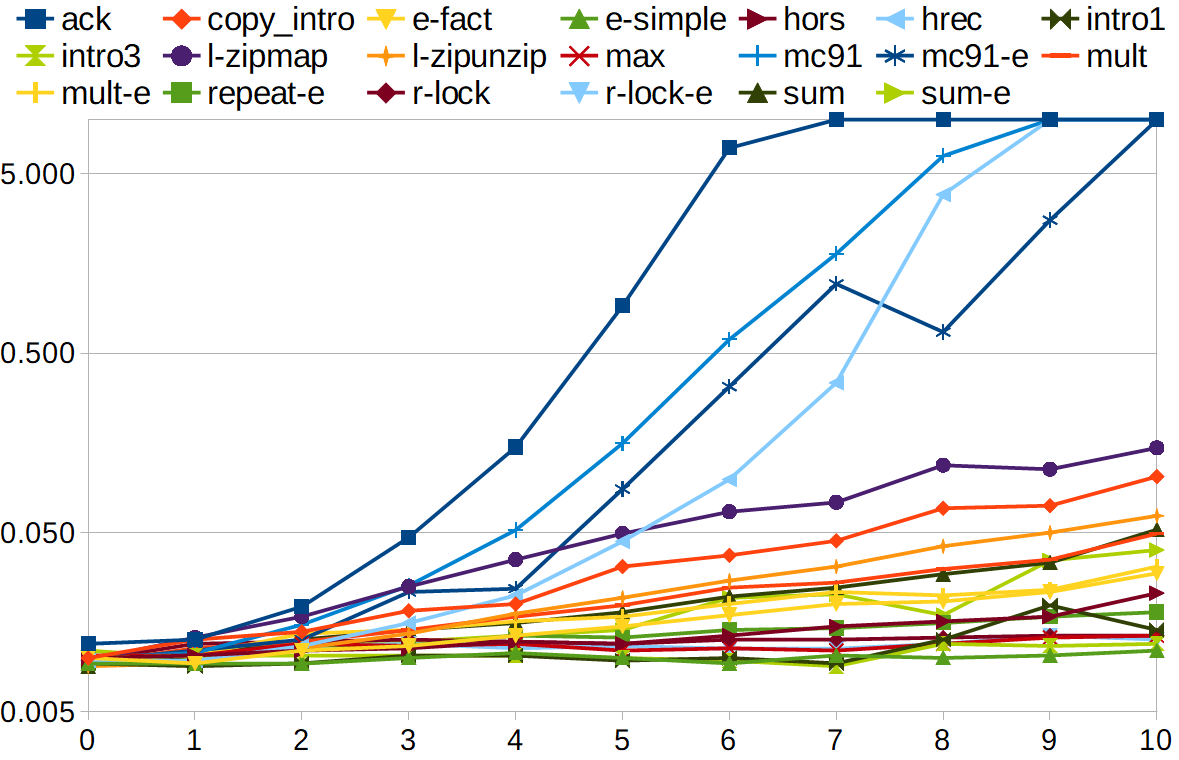}
	\caption{Average total execution times (s) with (bottom) and without (top) points-to analysis on bounds $k=0\dots 10$}
	\label{fig:result:base}
\end{figure}

\begin{table}
	\caption{Smallest bound needed for \textsc{BMC-2} with points-to analysis to find a counterexample (left). Percentage change in average time taken per bound after adding points-to analysis (right).}
	\label{table:pointsto}
	\begin{center}
	\begin{tabular}{ l r r }
		\hline			
		Program & $k$& time \\ \hline
		\texttt{mc91-e} & 1 & 0.011 \\
		\texttt{mult-e} & 1 & 0.010 \\
		\texttt{repeat-e} & 1 & 0.010 \\
		\texttt{r-lock-e} & 2 & 0.012 \\
		\texttt{sum-e} & 1 & 0.010 \\
		\hline			  
	\end{tabular}
    \quad
    	\begin{tabular}{ c r c r }
    		\hline			
    		$k$& \%$\Delta$ & $k$ & \%$\Delta$\\ \hline 
    		0& -10.928     & 6   & -83.172  \\
    		1& 2.412       & 7   & -73.968  \\
    		2& -1.442      & 8   & -62.803  \\
    		3& -96.846     & 9   & -49.481  \\
    		4& -95.399     & 10  & -47.611  \\
    		5& -95.065\\
    		\hline			  
    	\end{tabular}
	\end{center}
\end{table}

\paragraph{Comparison with \textsc{MoCHi}}
We were unable to compile \textsc{MoCHi} on our machines. Instead, we attempted to use the web interface and the Dockerfile image. Comparisons were unreliable, however, due to unexpected errors, which could have been due to parsing, timeouts, or bad installation. What we noticed in some examples was that \textsc{MoCHi} took significantly longer to find bugs when we modified its assertions. For instance, checking \texttt{mult-e} with $\texttt{assert(mult m m <= mult n n)}$ took $3.667$ seconds on average, which is an increase from $0.087$ seconds for the original program (asserting $\texttt{n+1 <= mult n n}$). In contrast, at $k=1$, \textsc{BMC-2} takes $0.012$ seconds, compared to the original $0.010$ seconds.

%%% Local Variables:
%%% mode: latex
%%% TeX-master: "main"
%%% End:

%
% ---- Bibliography ----
%
% BibTeX users should specify bibliography style 'splncs04'.
% References will then be sorted and formatted in the correct style.
%
\bibliographystyle{splncs04}
\bibliography{main}

\begin{thebibliography}{10}
\providecommand{\url}[1]{\texttt{#1}}
\providecommand{\urlprefix}{URL }
\providecommand{\doi}[1]{https://doi.org/#1}

\bibitem{DBLP:conf/tacas/AmlaKMM03}
Amla, N., Kurshan, R.P., McMillan, K.L., Medel, R.: Experimental analysis of
  different techniques for bounded model checking. In: Garavel, H., Hatcliff,
  J. (eds.) Tools and Algorithms for the Construction and Analysis of Systems,
  9th International Conference, {TACAS} 2003, Proceedings. Lecture Notes in
  Computer Science, vol.~2619, pp. 34--48. Springer (2003).
  \doi{10.1007/3-540-36577-X_4}, \url{https://doi.org/10.1007/3-540-36577-X_4}

\bibitem{Andersen:94:PhD}
Andersen, L.O.: Program Analysis and Specialization for the {C} Programming
  Language. Ph.D. thesis, DIKU, University of Copenhagen (May 1994), (DIKU
  report 94/19)

\bibitem{DBLP:conf/tacas/BiereCCZ99}
Biere, A., Cimatti, A., Clarke, E.M., Zhu, Y.: Symbolic model checking without
  bdds. In: Cleaveland, R. (ed.) Tools and Algorithms for Construction and
  Analysis of Systems, 5th International Conference, {TACAS} '99, Proceedings.
  Lecture Notes in Computer Science, vol.~1579, pp. 193--207. Springer (1999).
  \doi{10.1007/3-540-49059-0_14},
  \url{https://doi.org/10.1007/3-540-49059-0_14}

\bibitem{DBLP:conf/tacas/ClarkeKL04}
Clarke, E.M., Kroening, D., Lerda, F.: A tool for checking {ANSI-C} programs.
  In: Jensen, K., Podelski, A. (eds.) Tools and Algorithms for the Construction
  and Analysis of Systems, 10th International Conference, {TACAS} 2004,
  Proceedings. Lecture Notes in Computer Science, vol.~2988, pp. 168--176.
  Springer (2004). \doi{10.1007/978-3-540-24730-2_15},
  \url{http://dx.doi.org/10.1007/978-3-540-24730-2_15}

\bibitem{DBLP:journals/tcad/DSilvaKW08}
D'Silva, V., Kroening, D., Weissenbacher, G.: A survey of automated techniques
  for formal software verification. {IEEE} Trans. on {CAD} of Integrated
  Circuits and Systems  \textbf{27}(7),  1165--1178 (2008).
  \doi{10.1109/TCAD.2008.923410},
  \url{https://doi.org/10.1109/TCAD.2008.923410}

\bibitem{nom1}
Gabbay, M., Pitts, A.M.: A new approach to abstract syntax with variable
  binding. Formal Asp. Comput.  \textbf{13}(3-5),  341--363 (2002).
  \doi{10.1007/s001650200016}, \url{https://doi.org/10.1007/s001650200016}

\bibitem{DBLP:conf/paste/Hind01}
Hind, M.: Pointer analysis: haven't we solved this problem yet? In: Field, J.,
  Snelting, G. (eds.) Proceedings of the 2001 {ACM} {SIGPLAN-SIGSOFT} Workshop
  on Program Analysis For Software Tools and Engineering, PASTE'01. pp. 54--61.
  {ACM} (2001). \doi{10.1145/379605.379665},
  \url{http://doi.acm.org/10.1145/379605.379665}

\bibitem{hobmc}
Lin, Y.Y.: Bmc-2. \url{https://github.com/LaifsV1/BMC-2} (2017)

\bibitem{DBLP:journals/csur/Midtgaard12}
Midtgaard, J.: Control-flow analysis of functional programs. {ACM} Comput.
  Surv.  \textbf{44}(3),  10:1--10:33 (2012). \doi{10.1145/2187671.2187672},
  \url{http://doi.acm.org/10.1145/2187671.2187672}

\bibitem{DBLP:conf/tacas/MorseRCN014}
Morse, J., Ramalho, M., Cordeiro, L.C., Nicole, D., Fischer, B.: {ESBMC} 1.22 -
  (competition contribution). In: {\'{A}}brah{\'{a}}m, E., Havelund, K. (eds.)
  Tools and Algorithms for the Construction and Analysis of Systems - 20th
  International Conference, {TACAS} 2014, Proceedings. Lecture Notes in
  Computer Science, vol.~8413, pp. 405--407. Springer (2014).
  \doi{10.1007/978-3-642-54862-8_31},
  \url{https://doi.org/10.1007/978-3-642-54862-8_31}

\bibitem{DBLP:books/daglib/0098888}
Nielson, F., Nielson, H.R., Hankin, C.: Principles of program analysis.
  Springer (1999). \doi{10.1007/978-3-662-03811-6},
  \url{https://doi.org/10.1007/978-3-662-03811-6}

\bibitem{nom2}
Pitts, A.M.: Nominal Sets: Names and Symmetry in Computer Science. Cambridge
  University Press, New York, NY, USA (2013)

\bibitem{DBLP:conf/pepm/SatoUK13}
Sato, R., Unno, H., Kobayashi, N.: Towards a scalable software model checker
  for higher-order programs. In: Albert, E., Mu, S. (eds.) Proceedings of the
  {ACM} {SIGPLAN} 2013 Workshop on Partial Evaluation and Program Manipulation,
  {PEPM} 2013. pp. 53--62. {ACM} (2013). \doi{10.1145/2426890.2426900},
  \url{http://doi.acm.org/10.1145/2426890.2426900}

\bibitem{DBLP:conf/pldi/Shivers88}
Shivers, O.: Control-flow analysis in scheme. In: Wexelblat, R.L. (ed.)
  Proceedings of the {ACM} SIGPLAN'88 Conference on Programming Language Design
  and Implementation (PLDI), Atlanta, Georgia, USA, June 22-24, 1988. pp.
  164--174. {ACM} (1988). \doi{10.1145/53990.54007},
  \url{http://doi.acm.org/10.1145/53990.54007}

\bibitem{DBLP:conf/popl/Steensgaard96}
Steensgaard, B.: Points-to analysis in almost linear time. In: Boehm, H., Jr.,
  G.L.S. (eds.) POPL'96: The 23rd {ACM} {SIGPLAN-SIGACT} Symposium on
  Principles of Programming Languages. pp. 32--41. {ACM} Press (1996).
  \doi{10.1145/237721.237727}, \url{http://doi.acm.org/10.1145/237721.237727}

\end{thebibliography}

\newpage
\begin{subappendices}
\renewcommand{\thesection}{\Alph{section}}
%\section{Proof of Soundness}
%\label{app:soundness}
%\input{appendix/lemmas/soundness_open}
\newcommand{\wwts}[2]{We want to show:
\begin{align*}
&(#1,R,D,k)\appsigma\Downarrow(#2,R\appsigma,D\appsigma)\text{ and }\\
&\exists \sigma' \supseteq \sigma . (\sigma' \cong \phi')\text{ and }\phi' \implies (\ret = #2)\\
&\text{and }R\appsigma \subseteq R\{\sigma'\}\text{ and }D\appsigma = D\{\sigma'\}
\end{align*}}

\newcommand{\wwtss}[3]{We want to show:
\begin{align*}
&(#1,R,D,k)\appsigma\Downarrow(\chi,\hat R,S)\text{ and }\\
&\exists \sigma' \supseteq \sigma . (\sigma' \cong \phi')\text{ and }\phi' \implies (\ret = \chi)\\
&\text{and }\hat R \subseteq #2\{\sigma'\}\text{ and }S = #3\{\sigma'\}\text{, if $\chi \not\in \{\fail,\nil\}$}
\end{align*}}

%============== CORRECTNESS ==============
\section{Proof of Lemma~\ref{lem:1}}

\textbf{Lemma~\ref{lem:1}}~~%
Given the following:
	\begin{itemize}
	\item an extended configuration $(M,R,D,k)\appsigma$ which is terminating;
	\item a translation $\llbracket M,R,C,D,\phi,k \rrbracket =(\ret,\phi',R',C',D')$;
	\item the extended assignment $\sigma \cong \phi$;
	\end{itemize}
then:
\begin{itemize}
	\item $(M,R,D,k)\appsigma \Downarrow (\chi,\hat{R},\hat S)$ and $\exists \sigma' \supseteq \sigma. (\sigma'\cong\phi')$ and $\phi' \implies (\ret=\chi)$, and
	\item if \emph{$\chi \notin \{\fail,\nil\}$}, then
          $\hat R \subseteq R'\{\sigma'\}$ and $\hat S = D'\{\sigma'\}$.
\end{itemize}

\begin{proof}
By structural induction on $M$ and by induction on the size of the derivation of the semantics of $(M,R,D,k)\appsigma$, we have the following base cases:
\begin{enumerate}
\item $k=\nil$:

We want to show:
\begin{align*}
&(M,R,D,\nil)\appsigma\Downarrow(\nil,R\appsigma,D\appsigma)\text{ and }\\
&\exists \sigma' \supseteq \sigma . (\sigma' \cong \phi')\text{ and }\phi' \implies (\ret = \nil)
\end{align*}
We choose $\sigma' = \sigma[\ret \mapsto \nil]$. Since $\phi \cong \sigma$ and $\ret$ is fresh, we have $\phi' \cong \sigma'$. Because $\phi' = (\ret = \nil) \land \phi$, we know $\phi' \implies (\ret = \nil)$.

\item $M = \fail$ and $k\neq\nil$:

We want to show:
\begin{align*}
&(\fail,R,D,k)\appsigma\Downarrow(\fail,R\appsigma,D\appsigma)\text{ and }\\
&\exists \sigma' \supseteq \sigma . (\sigma' \cong \phi')\text{ and }\phi' \implies (\ret = \fail)
\end{align*}
Similarly to the $\nil$ case, we choose $\sigma' = \sigma[\ret \mapsto \fail]$.

\item $M = v = i,(),m$ and $k\neq\nil$:

\wwts{v}{v}
We choose $\sigma' = \sigma[\ret \mapsto v]$. As before, $\phi' \cong \sigma'$ and $\phi' \implies (\ret = v)$. Additionally, since $\ret$ is fresh, it is not in $R$ or $D$. So $R\appsigma \subseteq R\{\sigma'\}$ and $D\appsigma = D\{\sigma'\}$.

\item $M = x$ and $k\neq\nil$:

\wwts{x}{\chi}
We know $x\neq\nil$ by termination of the configuration. As such, setting $\chi = \sigma(x)$, $\chi$ cannot be $\nil$. We thus choose $\sigma' = \sigma[\ret \mapsto \chi]$, which holds as previous cases. Additionally, as with case $v$, $R\appsigma \subseteq R\{\sigma'\}$ and $D\appsigma = D\{\sigma'\}$.

\item $M = {!}r$ and $k\neq\nil$:

\wwts{{!}r}{v}
Let us set $v = \sigma(D(r))$. We thus choose $\sigma' = \sigma[\ret \mapsto v]$, which holds as with previous cases. Again, $R\appsigma \subseteq R\{\sigma'\}$ and $D\appsigma = D\{\sigma'\}$.

\item $M = \lambda x.M$ and $k\neq\nil$:

We want to show:
\begin{align*}
&(\lambda x.M,R,D,k)\appsigma\Downarrow(\hat m,R[\hat m \mapsto \lambda x.M]\appsigma,D\appsigma)\text{ and }\\
&\exists \sigma' \supseteq \sigma . (\sigma' \cong \phi')\text{ and }\phi' \implies (\ret = \hat m)\\
&\text{and }R[\hat m \mapsto \lambda x.M]\appsigma \subseteq R[m \mapsto \lambda x.M]\{\sigma'\}\text{ and }D\appsigma = D\{\sigma'\}
\end{align*}
By nominal determinacy of the operational semantics, because $m$ is fresh, we can choose $\hat m$ such that $\hat m = m$. With this, we choose $\sigma' = \sigma[\ret \mapsto \hat m]$. Therefore, as before, $\phi' \cong \sigma'$ and $\phi' \implies (\ret = \hat m)$. Additionally, since $m$ and $\hat m$ are fresh, and we chose $\hat m = m$, we know that $R[\hat m \mapsto \lambda x.M] = R[m \mapsto \lambda x.M]$, so, as previously, $R[\hat m \mapsto \lambda x.M]\appsigma \subseteq R[m \mapsto \lambda x.M]\{\sigma'\}$ and $D\appsigma = D\{\sigma'\}$.
\end{enumerate}

We proceed with the inductive cases, where $k\neq\nil$:

\begin{enumerate}
\item $M = \pi_i M$:

\wwtss{\pi_i M}{R_1}{D_1}
By cases on the operational semantics:
\begin{enumerate}
\item if $M\appsigma$ does not abort, by rule $(\Downarrow_{\pi_i})$, we know that:
\[(M,R,D,k)\appsigma\Downarrow(\pair{v_1,v_2},\hat R,S)\]
Since $\sigma \cong \phi$, by the Inductive Hypothesis, we know that:
\begin{align*}
&\exists \sigma_1 \supseteq \sigma . \sigma_1 \cong \phi_1\text{ and }\phi_1 \implies (\ret_1 = \pair{v_1,v_2})\\
&\text{ and }\hat R_1 \subseteq R_1\{\sigma_1\}\text{ and }S=D_1\{\sigma_1\}
\end{align*}
We choose $\sigma' = \sigma_1[\ret \mapsto v_i]$. Since $M$ does not abort and $\pi_i\pair{v_1,v_2} = v_i$, we know $\chi = v_i$. Therefore, $\sigma' \cong \phi'$ and $\phi' \implies (\ret = \chi)$. Additionally, since $\sigma' \supseteq \sigma_1$ and $\ret$ is fresh, and because $\hat R_1 \subseteq R_1\{\sigma_1\}$ and $S = D_1\{\sigma_1\}$, we know that $\hat R_1 \subseteq R_1\{\sigma'\}$ and $S = D_1\{\sigma'\}$.

\item if $M\appsigma$ aborts, then by rule $(\not\Downarrow_{\pi_i})$, we have:
\[(M,R,D,k)\appsigma\Downarrow(\chi,R\appsigma,D\appsigma)\]
Since $\sigma \cong \phi$, we proceed by the Inductive Hypothesis as previously:
\begin{align*}
&\exists \sigma_1 \supseteq \sigma . \sigma_1 \cong \phi_1\text{ and }\phi_1 \implies (\ret = \chi)
\end{align*}
Let us now set $\chi = \sigma(\ret_1)$, where $\chi$ must be $\fail$ or $\nil$. We then choose $\sigma' = \sigma_1[\ret \mapsto \sigma(\ret_1)]$, so $\sigma' \cong \phi'$ as with the previous case.
\end{enumerate}
Case $r:=M$ is proven similarly, where $M$ evaluates to some values $v$ instead of a pair, and the whole term evaluates to $()$ instead of $v_i$.

\item $M = M_1 \oplus M_2$:

\wwtss{M_1 \oplus M_2}{R_2}{D_2}
By cases on the operational semantics:
\begin{enumerate}
\item if neither $M_1\appsigma$ or $M_2\appsigma$ abort, then by rule $(\Downarrow_\oplus)$, we have:
\begin{align*}
(M_1,R,D,k)\appsigma&\Downarrow(i_1,\hat R_1,S_1)\\
(M_2\appsigma,\hat R_1,S_1,k)&\Downarrow(i_2,\hat R_2,S_2)
\end{align*}
Since $\sigma \cong \phi$, by the Inductive Hypothesis, we know that:
\begin{align*}
&\exists \sigma_1 \supseteq \sigma . \sigma_1 \cong \phi_1\text{ and }\phi_1 \implies (\ret_1 = i_1)\\
&\text{ and }\hat R_1 \subseteq R_1\{\sigma_1\}\text{ and }S_1=D_1\{\sigma_1\}
\end{align*}
And, because $\hat R_1 \subseteq R_1\appsigmas{1}$ and $S_1 = D_1\appsigmas{1}$, by weakening the configuration, we have: 
\[(M_2\appsigma,R_1\appsigmas{1},S_1,k)\Downarrow(i_2,\hat R_2',S_2)\]
such that $\hat R_2' \supseteq \hat R_2$. Now, since $\sigma_1 \cong \phi_1$, by the Inductive Hypothesis:
\begin{align*}
&\exists \sigma_2 \supseteq \sigma_1 . \sigma_2 \cong \phi_2\text{ and }\phi_2 \implies (\ret_2 = i_2)\\
&\text{ and }\hat R_2' \subseteq R_2\{\sigma_2\}\text{ and }S_2=D_2\{\sigma_2\}
\end{align*}
We choose $\sigma' = \sigma_2[\ret \mapsto i]$ where $i = i_1 \oplus i_2$. Let us set $\chi = i$. As with earlier cases, we know $\phi' \cong \sigma'$ and $\phi' \implies (\ret = \chi)$. Additionally, we have $\hat R_2 \subseteq \hat R_2' \subseteq R_2\appsigmas{2}$ and $S_2 = D_2\appsigmas{2}$.

\item if $M_1\appsigma$ aborts, by rule $(\not \Downarrow_{\oplus_1})$, we have:
\[(M_1,R,D,k)\appsigma\Downarrow(\chi,R\appsigma, D\appsigma)\]
Since $\sigma \cong \phi$, by the Inductive Hypothesis we know:
\[\exists \sigma_1 \supseteq \sigma . \sigma_1 \cong \phi_1\text{ and }\phi_1 \implies (\ret_1 = \chi)\]
And since $\phi_1 \cong \sigma_1$, by Lemma~\ref{lem:2}, we know:
\[\exists \sigma_2 \supseteq \sigma_1 . \sigma_2 \cong \phi_2\]
Let us set $\chi = \sigma_2(\ret_1)$ where $\chi \in \{\fail,\nil\}$. We now choose $\sigma' = \sigma_2[\ret \mapsto \sigma_2(\ret_1)]$, so $\phi' \cong \sigma'$ and $\phi' \implies (\ret = \chi)$.

\item if $M_1\appsigma$ does not abort, but $M_2\appsigma$ aborts, then by rule $(\not \Downarrow_{\oplus_2})$:
\begin{align*}
(M_1,R,D,k)\appsigma&\Downarrow(i_1,\hat R_1,S_1)\\
(M_2\appsigma,\hat R_1,S_1,k)&\Downarrow(\chi,\hat R_1,S_1)
\end{align*}
This case is proven like case (a), where we choose $\sigma' = \sigma_2[\ret \mapsto \sigma_2(\ret_2)]$ instead, and $\chi = \sigma_2(\ret_2)$ must be either $\fail$ or $\nil$. We do not need to show $\hat R_2 \subseteq R_2\appsigmas{2}$ and $S_2 = D_2\appsigmas{2}$ here.
\end{enumerate}
Case $\pair{M_1,M_2}$ is proven in the same way, except we evaluate to values $v_1$ and $v_2$ instead of $i_1$ and $i_2$, and instead of $i = i_1 \oplus i_2$, we have $\pair{v_1,v_2}$.

\item $M = \text{let $x=M$ in $M'$}$:

\wwtss{\text{let $x=M$ in $M'$}}{R_2}{D_2}
By cases on the operational semantics:
\begin{enumerate}
\item if neither $M\appsigma$ or $M'\{v_1/x\}\appsigma$ abort, then by rule $(\Downarrow_{\text{let}})$:
\begin{align*}
(M,R,D,k)\appsigma&\Downarrow(v_1,\hat R_1,S_1)\\
(M'\{v_1/x\}\appsigma,\hat R_1,S_1,k)&\Downarrow(v_2,\hat R_2,S_2)
\end{align*}
Since $\sigma \cong \phi$, by the Inductive Hypothesis, we know that:
\begin{align*}
&\exists \sigma_1 \supseteq \sigma . \sigma_1 \cong \phi_1\text{ and }\phi_1 \implies (\ret_1 = v_1)\\
&\text{ and }\hat R_1 \subseteq R_1\{\sigma_1\}\text{ and }S_1=D_1\{\sigma_1\}
\end{align*}
And, because $\hat R_1 \subseteq R_1\appsigmas{1}$ and $S_1 = D_1\appsigmas{1}$, then, again, by weakening and because $\sigma_1 \cong \phi_1$ and $\phi_1 \implies (\ret_1 = v_1)$, we have: 
\[(M'\{\ret_1/x\},R_1,S_1,k)\appsigmas{1}\Downarrow(v_2,\hat R_2',S_2)\]
such that $\hat R_2' \supseteq \hat R_2$. Now, since $\sigma_1 \cong \phi_1$, by the Inductive Hypothesis:
\begin{align*}
&\exists \sigma_2 \supseteq \sigma_1 . \sigma_2 \cong \phi_2\text{ and }\phi_2 \implies (\ret_2 = v_2)\\
&\text{ and }\hat R_2' \subseteq R_2\{\sigma_2\}\text{ and }S_2=D_2\{\sigma_2\}
\end{align*}
We choose $\sigma' = \sigma_2[\ret \mapsto v_2]$. Let us set $\chi = v_2$. As earlier, we know $\phi' \cong \sigma'$ and $\phi' \implies (\ret = \chi)$. We also have $\hat R_2 \subseteq \hat R_2' \subseteq R_2\appsigmas{2}$ and $S_2 = D_2\appsigmas{2}$.

\item if $M\appsigma$ aborts, then by rule $(\not\Downarrow_{\text{let}_1})$:
\begin{align*}
(M,R,D,k)\appsigma&\Downarrow(\chi,R\appsigma,D\appsigma)
\end{align*}
This case is identical to $2.(b)$.

\item if $M\appsigma$ does not abort, but $M'\{v_1/x\}\appsigma$ does, then by rule $(\not\Downarrow_{\text{let}_2})$:
\begin{align*}
(M,R,D,k)\appsigma&\Downarrow(v_1,\hat R_1,S_1)\\
(M'\{v_1/x\}\appsigma,\hat R_1,S_1,k)&\Downarrow(\chi,\hat R_1,S_1)
\end{align*}
This case is identical to $2.(c)$.
\end{enumerate}
Case $mM$ is proven similarly, where we choose $\sigma' = \sigma_2[\ret \mapsto v_2]$ instead, and decrement $k$ upon substitution, which gives us the evaluation rule: \[(N\{v_1/x\}\appsigma{1},\hat R_1,S_1,k-1)\Downarrow(v_2,\hat R_2,S_2)\]

\item $M = \text{letrec $f=\lambda x.M$ in $M'$}$:

\wwtss{\text{letrec $f=\lambda x.M$ in $M'$}}{R'}{D'}
We choose $\sigma_1 = \sigma[f' \mapsto m]$. Since $\sigma_1 \cong (f' = m)\land\phi$, this case is directly proven by the inductive hypothesis.

\item $M = \text{if $M_b$ then $M_1$ else $M_0$}$:

\wwtss{\text{if $M_b$ then $M_1$ else $M_0$}}{R_1}{C'}
By cases on the operational semantics:
\begin{enumerate}
\item if $M_b\appsigma$ evaluates to $i$, and $M_j\appsigma$ does not abort--where $j=0$ if $i=0$, and $j=1$ otherwise--then by rule $(\Downarrow_{\text{if}})$:
\begin{align*}
(M_b,R,D,k)\appsigma&\Downarrow(i,\hat R_b,S_b)\\
(M_j\appsigma,\hat R_b,S_b,k)&\Downarrow(v,\hat R_j,S_j)
\end{align*}
Since $\sigma \cong \phi$, by the Inductive Hypothesis, we know that:
\begin{align*}
&\exists \sigma_b \supseteq \sigma . \sigma_b \cong \phi_b\text{ and }\phi_b \implies (\ret_b = j)\\
&\text{ and }\hat R_b \subseteq R_b\{\sigma_b\}\text{ and }S_b=D_b\{\sigma_b\}
\end{align*}
Now, by cases on $j$:
\begin{enumerate}
\item if $j=0$ then, since $\hat R_b \subseteq R_b\appsigmas{b}$ and $S_b = D_b\appsigmas{b}$, by weakening and because $\sigma_b \cong \phi_b$, we have: 
\[(M_0,R_b,S_b,k)\appsigmas{b}\Downarrow(v,\hat R_0',S_0)\]
such that $\hat R_0' \supseteq \hat R_0$. Since $\sigma_0 \cong \phi_0$, by the Inductive Hypothesis:
\begin{align*}
&\exists \sigma_0 \supseteq \sigma_b . \sigma_0 \cong \phi_0\text{ and }\phi_0 \implies (\ret_0 = v)\\
&\text{ and }\hat R_0' \subseteq R_0\{\sigma_0\}\text{ and }S_0=D_0\{\sigma_0\}
\end{align*}
Now, because $\phi_0 \cong \sigma_0$, by Lemma~\ref{lem:2} we have:
\[\exists \sigma_1 \supseteq \sigma_0 . \sigma_1 \cong \phi_1\]
We can then choose $\sigma' = \sigma_1[\ret \mapsto v]$. Let us set $\chi = v$. As earlier, we know $\phi' \cong \sigma'$ and $\phi' \implies (\ret = \chi)$. We also know by Lemma~\ref{lem:3} that $R_0$ must be preserved in $R_1$, which means $\hat R_0 \subseteq \hat R_0' \subseteq R_0\appsigmas{0} \subseteq R_1\appsigmas{1}$. Additionally, because $i=0$, we know $S_0 = D_0\appsigmas{0}$.

\item if $j=1$ then, since $\phi_b \cong \sigma_b$, we have by Lemma~\ref{lem:2}:
\[\exists \sigma_0 \supseteq \sigma_b . \sigma_0 \cong \phi_0\]
Now, by Lemma~\ref{lem:3}, we know $R_b$ must be preserved in $R_0$, so we have $\hat R_b \subseteq R_b \subseteq R_0$. This gives us:
\[(M_1,R_0,S_b,k)\appsigmas{0}\Downarrow(v,\hat R_1',S_1)\]
such that $\hat R_1' \supseteq R_1$. Thus, because $\sigma_0 \cong \phi_0$, we know by the Inductive Hypothesis that:
\begin{align*}
&\exists \sigma_1 \supseteq \sigma_0 . \sigma_1 \cong \phi_1\text{ and }\phi_1 \implies (\ret_1 = v)\\
&\text{ and }\hat R_1' \subseteq R_1\{\sigma_1\}\text{ and }S_1=D_1\{\sigma_1\}
\end{align*}
We now choose $\sigma' = \sigma_1[\ret \mapsto v]$. Let us set $\chi = v$. As earlier, we know $\phi' \cong \sigma'$ and $\phi' \implies (\ret = \chi)$. We also know $\hat R_1' \subseteq R_1\appsigmas{1}$ and because $i\neq 0$, we know $S_1 = D_1\appsigmas{1}$.
\end{enumerate}

\item if $M\appsigma$ aborts, then by rule $(\not\Downarrow_{\text{if}_1})$:
\begin{align*}
(M_b,R,D,k)\appsigma&\Downarrow(\chi,R\appsigma,D\appsigma)
\end{align*}
Since $\sigma \cong \phi$, by the Inductive Hypothesis we know:
\[\exists \sigma_b \supseteq \sigma . \sigma_b \cong \phi_b\text{ and }\phi_b \implies (\ret_b = \chi)\]
Then since $\phi_b \cong \sigma_b$, by Lemma~\ref{lem:2}, we know:
\[\exists \sigma_0 \supseteq \sigma_b . \sigma_0 \cong \phi_0\]
And because $\phi_0 \cong \sigma_0$, by Lemma~\ref{lem:2} again, we know:
\[\exists \sigma_1 \supseteq \sigma_0 . \sigma_1 \cong \phi_1\]
Let us then set $\chi = \sigma_1(\ret_b)$ where $\chi \in \{\fail,\nil\}$. We choose $\sigma' = \sigma_1[\ret \mapsto \sigma_1(\ret_b)]$, so $\phi' \cong \sigma'$ and $\phi' \implies (\ret = \chi)$.

\item if $M_b\appsigma$ evaluates to $i$, but $M_j\appsigma$ aborts--where $j=0$ if $i=0$, and $j=1$ otherwise--then by rule $(\not\Downarrow_{\text{if}_2})$:
\begin{align*}
(M_b,R,D,k)\appsigma&\Downarrow(i,\hat R_b,S_b)\\
(M_j\appsigma,\hat R_b,S_b,k)&\Downarrow(\chi,\hat R_b,S_b)
\end{align*}
This case is proven like case (a), where we choose $\sigma' = \sigma_1[\ret \mapsto \sigma_1(\ret_j)]$ instead, and $\chi = \sigma_1(\ret_j)$ must be either $\fail$ or $\nil$. Additionally, because the configuration aborts, we are not required to show conditions involving the repository and store.
\end{enumerate}

\item $M = x M$:

\wwtss{x M}{R_n}{C_n'}
Let us set $\sigma(x) = m_j$ for some $j \in \{1,\dots,n\}$. By cases on the operational semantics:
\begin{enumerate}
\item if neither $M\appsigma$ or $N_j\{v/y_j\}\appsigma$ abort, then by rule $(\Downarrow_{\text{@}})$:
\begin{align*}
(M,R,D,k)\appsigma&\Downarrow(v_0,\hat R_0,S_0)\\
(N_j\{v/y_j\}\appsigma,\hat R_0,S_0,k-1)&\Downarrow(v_j,\hat R_j,S_j)
\end{align*}
Since $\sigma \cong \phi$, by the Inductive Hypothesis, we know that:
\begin{align*}
&\exists \sigma_0 \supseteq \sigma . \sigma_0 \cong \phi_0\text{ and }\phi_0 \implies (\ret_0 = v_0)\\
&\text{ and }\hat R_0 \subseteq R_0\{\sigma_0\}\text{ and }S_0=D_0\{\sigma_0\}
\end{align*}
Now, for every $i \in \{1,\dots,j-1\}$, we can consecutively apply Lemma~\ref{lem:2}, starting from $\sigma_0 \cong \phi_0$, to obtain:
\[\exists \sigma_{j-1} \supseteq \dots \supseteq \sigma_0 . \sigma_{j-1} \cong \phi_{j-1}\]
Then, because $\hat R_0 \subseteq R_0\appsigmas{0} \subseteq \dots \subseteq R_{j-1}\appsigmas{{j-1}}$ and $S_0 = D_0\appsigmas{0} = D_0\appsigmas{{j-1}}$, by weakening and because $\sigma_{j-1} \cong \phi_{j-1}$, we have: 
\[(N_j\{v/y_j\}\appsigmas{{j-1}},R_{j-1}\appsigmas{{j-1}},S_0,k)\Downarrow(v_j,\hat R_j',S_j)\]
such that $\hat R_j' \supseteq \hat R_j$. So, because $\sigma_{j-1} \cong \phi_{j-1}$, by the Inductive Hypothesis:
\begin{align*}
&\exists \sigma_j \supseteq \sigma_{j-1} . \sigma_j \cong \phi_j\text{ and }\phi_j \implies (\ret_j = v_j)\\
&\text{ and }\hat R_j' \subseteq R_j\{\sigma_0\}\text{ and }S_j=D_j\{\sigma_j\}
\end{align*}
Now, because $\phi_j \cong \sigma_j$, again by consecutively applying Lemma~\ref{lem:2}, we have:
\[\exists \sigma_n \supseteq ... \sigma_j . \sigma_n \cong \phi_n\]
We can then choose $\sigma' = \sigma_n[\ret \mapsto v_j]$. Let us set $\chi = v_j$. Again, we know $\phi' \cong \sigma'$ and $\phi' \implies (\ret = \chi)$. We also know by Lemma~\ref{lem:3} that $R_j$ must be preserved in $R_n$, which means $\hat R_j \subseteq \hat R_j' \subseteq R_j\appsigmas{j} \subseteq R_n\appsigmas{n}$. Additionally, we know $S_j = D_j\appsigmas{n}$.

\item if $M\appsigma$ aborts, then by rule $(\not\Downarrow_{\text{@}_1})$:
\begin{align*}
(M_b,R,D,k)\appsigma&\Downarrow(\chi,R\appsigma,D\appsigma)
\end{align*}
Since $\sigma \cong \phi$, by the Inductive Hypothesis we know:
\[\exists \sigma_0 \supseteq \sigma . \sigma_0 \cong \phi_0\text{ and }\phi_0 \implies (\ret_0 = \chi)\]
Then since $\phi_0 \cong \sigma_0$, as before, by applying Lemma~\ref{lem:2} consecutively, we know:
\[\exists \sigma_n \supseteq ... \supseteq \sigma_0 . \sigma_n \cong \phi_n\]
Let us then set $\chi = \sigma_n(\ret_0)$ where $\chi \in \{\fail,\nil\}$. We choose $\sigma' = \sigma_n[\ret \mapsto \sigma_n(\ret_0)]$, so $\phi' \cong \sigma'$ and $\phi' \implies (\ret = \chi)$.

\item if $M\appsigma$ does not abort, but $N_j\{v/y_j\}\appsigma$ aborts, then by rule $(\not\Downarrow_{\text{@}_2})$ we have:
\begin{align*}
(M,R,D,k)\appsigma&\Downarrow(v,\hat R_0,S_0)\\
(N_j\{v/y_j\}\appsigma,\hat R_0,S_0,k-1)&\Downarrow(\chi,\hat R_0,S_0)
\end{align*}
This case is proven like case (a), where we choose $\sigma' = \sigma_n[\ret \mapsto \sigma_n(\ret_j)]$, and $\chi = \sigma_n(\ret_j)$ must be either $\fail$ or $\nil$. Additionally, because the configuration aborts, we do not need to prove conditions for the repository and store.
\end{enumerate}
\end{enumerate}
\end{proof}
\newcommand{\wwtz}[1]{We want to show:
\[\exists \sigma' \supseteq \sigma . \sigma' \cong (#1)\]}

\begin{lemma}[Uniqueness of the translation]\label{lem:2}
Given an assignment $\sigma$ and $\phi$ where $\sigma \cong \phi$, and a translation $\sem{M,R,C,D,\phi,k}=(\ret,\phi',R',C',D')$, we know there exists some $\sigma' \supseteq \sigma$ such that $\sigma' \cong \phi'$.
\end{lemma}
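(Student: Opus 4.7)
The plan is to proceed by structural induction on $M$, with a secondary induction on $k$, and case analysis on the top-level clause of the BMC translation. In every case $\phi'$ extends $\phi$ with new conjuncts mentioning the fresh return variable $\ret$ and any freshly introduced SSA variables in $C'\setminus C$. The task is to exhibit an extension $\sigma'\supseteq\sigma$ that assigns values to exactly these fresh symbols so that $\sigma'$ remains consistent, satisfies the new conjuncts, and is forced by them.

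For the base cases ($k=\nil$, $M=\fail$, $M=v$, $M={!r}$, and $M=\lambda x.N$), the added conjunct is a single equation $\ret=e$ where $e$ is one of $\nil$, $\fail$, $v$, $D(r)$, or a fresh method name $m$; since $\sigma$ already interprets every symbol of $e$, setting $\sigma'=\sigma[\ret\mapsto e\{\sigma\}]$ (choosing the fresh $m$ in the $\lambda$ case so as to keep $\sigma'$ functional) immediately gives $\sigma'\cong\phi'$. For the non-branching inductive cases (projections, $\oplus$, pairing, $\texttt{let}$, $\texttt{letrec}$, $mM$, and assignment), I would apply the induction hypothesis to each subterm in sequence, producing a chain $\sigma\subseteq\sigma_1\subseteq\sigma_2$ with $\sigma_i\cong\phi_i$, and finally extend by $\ret$ and the fresh SSA variable created by any $C[r]$ update. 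The $F$-guard introduces no new difficulty: once $\sigma_i(\ret_i)$ is determined, its being in $\{\fail,\nil\}$ or not forces $\ret$ uniquely via the three conjuncts of $F$.

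The main obstacle will be the two branching cases, namely the conditional and the non-deterministic method application $x^\theta M$. In both cases the translation executes several subterms symbolically and joins their outcomes with guarded implications $(\ret_b=i)\Rightarrow\cdots$ or $(x=m_i)\Rightarrow\cdots$ over a fresh batch of SSA variables in $C'$. The plan is to chain the induction hypothesis through each branch, obtaining $\sigma_n\cong\phi_n$, and then to extend once more by choosing $\sigma'(\ret)$ and $\sigma'(C'(r))$ according to whichever branch-guard is made true by $\sigma_n$. Since the branch-guards are pairwise exclusive, under $\sigma_n$ at most one implication is active: for the unique active branch $j$ we set $\sigma'(\ret)=\sigma_n(\ret_j)$ and $\sigma'(C'(r))=\sigma_n(D_j(r))$, while the remaining implications are vacuously true and impose no constraint. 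Freshness of $\ret$ and of the new SSA variables in $C'$ ensures that $\sigma'$ is well-defined, and because the active implication's antecedent is true under $\sigma'$, $\phi'$ determines these new mappings uniquely. The degenerate subcase $R\upharpoonright\theta=\emptyset$ for $xM$ is handled as a base case where the added conjunct is simply $\ret=\nil$.
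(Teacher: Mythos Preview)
Your approach mirrors the paper's: induction on $k$ and the structure of $M$, base cases extending $\sigma$ by a value for $\ret$, and inductive cases chaining the induction hypothesis through the subterms before extending once more with $\ret$ and any fresh SSA variables.

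There is, however, a genuine gap in your treatment of the branching and assignment cases, concerning the second clause of $\cong$. You assert that ``$\phi'$ determines these new mappings uniquely'', but this fails precisely when an $F$-guard fires. If the outer guard is triggered (e.g.\ $\sigma_n(\ret_b)\in\{\fail,\nil\}$ in the conditional, $\sigma_n(\ret_0)\in\{\fail,\nil\}$ in $xM$, or $\sigma_1(\ret_1)\in\{\fail,\nil\}$ in $r:=M$), or if the inner guard on the selected branch fires (the active $\ret_j\in\{\fail,\nil\}$), then the freshly created SSA variables $C'(r)$ (resp.\ $D_1'(r)$) are left \emph{unconstrained} by $\phi'$. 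Putting any particular value for them into $\sigma'$ then violates the requirement $\phi'\implies C'(r)=\sigma'(C'(r))$, so $\sigma'\not\cong\phi'$. The paper's proof handles exactly these subcases by \emph{omitting} the unconstrained variables from $\dom(\sigma')$ and arguing that $\sigma'\vDash\phi'$ nonetheless holds because some further extension $\sigma''\supseteq\sigma'$ (assigning those variables arbitrarily) satisfies $\phi'$, while $\phi'$ continues to imply every binding actually present in $\dom(\sigma')$. You need the same manoeuvre; your plan as stated only covers the subcase where a branch is genuinely active and its own $\ret_j$ is a value.
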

\begin{proof}
Assuming $\sigma \cong \phi$, by induction on $k$ and then by structural induction on $M$, we have the base cases:
\begin{enumerate}
\item $k=\nil$:

\wwtz{(\ret = \nil) \land \phi}
We choose $\sigma' = \sigma[\ret \mapsto \nil]$. Since $\sigma \cong \phi$, and the only fresh name $\ret$ maps to $\nil$ in $\sigma'$, we know $\sigma' \cong ((\ret = \nil) \land \phi)$.

\item $M = \fail$ and $k\neq\nil$:

\wwtz{(\ret = \fail) \land \phi}
We choose $\sigma' = \sigma[\ret \mapsto \fail]$, so $\sigma' \cong ((\ret = \fail) \land \phi)$. Similarly for the remaining base cases:
\begin{enumerate}
\item for $M = v = i,()$, choose $\sigma' = \sigma[\ret \mapsto v]$, so $\sigma' \cong ((\ret = v) \land \phi)$.
\item for $M = m$, choose $\sigma' = \sigma[\ret \mapsto m]$, so $\sigma' \cong ((\ret = m) \land \phi)$.
\item for $M = x$, choose $\sigma' = \sigma[\ret \mapsto x]$, so $\sigma' \cong ((\ret = x) \land \phi)$.
\item for $M = {!}r$, choose $\sigma' = \sigma[\ret \mapsto \sigma(D(r))]$, so $\sigma' \cong ((\ret = D(r)) \land \phi)$.
\item for $M = \lambda x.M$, choose $\sigma' = \sigma[\ret \mapsto m]$, so $\sigma' \cong ((\ret = m) \land \phi)$.
\end{enumerate}
\end{enumerate}
With base cases done, we recall predicate formula $F$:
\begin{align*}
F~a~b~P = &((a=\fail)\implies (b=\fail))\ \land\\
          &((a=\nil) \implies (b=\nil))\  \land\\
          &((a=\fail) \lor (a=\nil) \lor P)
\end{align*}
We then have the following inductive cases:
\begin{enumerate}
\item $M = \pi_i M$:

\wwtz{(F~\ret_1~\ret(\ret=\pi_i\ret_1)) \land \phi_1}
We have $\sigma \cong \phi$, so by the Inductive Hypothesis:
\[\exists \sigma_1 \supseteq \sigma . \sigma_1 \cong \phi_1\]
Let us set $\chi = \sigma_1(\ret_1)$. By cases on $\chi$:
\begin{enumerate}
\item if $\chi = \pair{v_1,v_2}$, we choose $\sigma' = \sigma_1[\ret \mapsto v_i]$.
\item if $\chi \in \{\fail,\nil\}$, we choose $\sigma' = \sigma_1[\ret \mapsto \chi]$.
\end{enumerate}

\item $M = r:=M$:

\wwtz{(F~\ret_1~\ret(\ret=\pi_i\ret_1)) \land \phi_1}
We have $\sigma \cong \phi$, so by the Inductive Hypothesis:
\[\exists \sigma_1 \supseteq \sigma . \sigma_1 \cong \phi_1\]
Let us set $\chi = \sigma_1(\ret_1)$. By cases on $\chi$:
\begin{enumerate}
\item if $\chi = v$, we choose $\sigma' = \sigma_1[\ret \mapsto (), D(r) \mapsto v]$.

\item if $\chi \in \{\fail,\nil\}$, we choose $\sigma' = \sigma_1[\ret \mapsto \chi]$. We then know $\sigma'$ uniquely satisfies $\phi'$ up to the disjoint succeeding clause of $F$, which contains $D(r)$. Since the disjunction is trivially true, we are allowed to extend $\sigma'$ to map $D(r)$ to an arbitrary value, e.g. $\sigma''= \sigma'[D(r)\mapsto \chi]$. Because there exists a $\sigma'' \supseteq \sigma'$ that satisfies $\phi'$, and $\phi'$ implies $\sigma'$, we have $\sigma' \cong \phi'$.
\end{enumerate}

\item $M = M_1 \oplus M_2$:

\wwtz{(F~\ret_1~\ret(F~\ret_2~\ret(\ret=\ret_1\oplus\ret_2))) \land \phi_2}
We have $\sigma \cong \phi$, so by the Inductive Hypothesis:
\[\exists \sigma_1 \supseteq \sigma . \sigma_1 \cong \phi_1\]
Since $\sigma_1 \cong \phi_1$, by the Inductive Hypothesis:
\[\exists \sigma_2 \supseteq \sigma_1 . \sigma_2 \cong \phi_2\]
Let us set $\chi_1 = \sigma_2(\ret_1)$ and $\chi_2 = \sigma_2(\ret_2)$. By cases on $\chi_1,\chi_2$:
\begin{enumerate}
\item if $\chi_1 = v_1$ and $\chi_2 = v_2$, we choose $\sigma' = \sigma_2[\ret \mapsto v]$, where $v = v_1 \oplus v_2$.

\item if $\chi_1 \in \{\fail,\nil\}$, we choose $\sigma' = \sigma_2[\ret \mapsto \chi_1]$.

\item if $\chi_1 = v_1$ and $\chi_2 \in \{\fail,\nil\}$, we choose $\sigma' = \sigma_2[\ret \mapsto \chi_2]$.
\end{enumerate}
A similar proof applies to cases:
\begin{enumerate}
\item $\pair{M_1,M_2}$, where $\sigma' = \sigma_2[\ret \mapsto \pair{v_1,v_2}]$ is chosen instead of $v$ in (a).

\item `$\text{let $x=M$ in $M'$}$' and `$mM$', where $\sigma' = \sigma_2[\ret \mapsto v_2]$ is chosen instead of $v$ in (a).
\end{enumerate}

\item $M = \text{letrec $f=\lambda x.M$ in $M'$}$:

\wwtz{\phi'}
where $\phi'$ is the result of translating $M'\{f'/f\}$.

Choose $\sigma_1 = \sigma[f \mapsto m]$ such that $\sigma_1 \cong (\phi \land (f'=m))$. This case is then directly proven by the inductive hypothesis.

\item $M = \text{if $M_b$ then $M_1$ else $M_0$}$:

\wwtz{(F~\ret_b~\ret(\psi_0 \land \psi_1)) \land \phi_1}
We have $\sigma \cong \phi$, so by the Inductive Hypothesis:
\[\exists \sigma_b \supseteq \sigma . \sigma_b \cong \phi_b\]
Since $\sigma_b \cong \phi_b$, by the Inductive Hypothesis:
\[\exists \sigma_0 \supseteq \sigma_b . \sigma_0 \cong \phi_0\]
Since $\sigma_0 \cong \phi_0$, by the Inductive Hypothesis:
\[\exists \sigma_1 \supseteq \sigma_0 . \sigma_1 \cong \phi_1\]
Let us set $\chi_b = \sigma_1(\ret_b)$, $\chi_0 = \sigma_1(\ret_0)$ and $\chi_1 = \sigma_1(\ret_1)$. By cases on $\chi_b$:
\begin{enumerate}
\item if $\chi_b = i$, we choose $\sigma' = \sigma_1[\ret \mapsto \chi_j, C'(r) \mapsto D_j(r)]$ for every $r \in \varPi$, where $j = 0$ if $i=0$, else $j=1$.

\item if $\chi_b \in \{\fail,\nil\}$, we choose $\sigma' = \sigma_1[\ret \mapsto \chi_b]$. We know $\sigma'$ uniquely satisfies $\phi'$ up to the disjoint succeeding clauses of $F$ involving $\psi_0$ and $\psi_1$, which contain $C'(r)$ for all $r \in \varPi$. Since the disjunction is trivially true, we are allowed to extend $\sigma'$ to map $C'(r)$ to an arbitrary value, e.g. $\sigma''= \sigma'[C(r)\mapsto \chi_b]$ for each $r \in \varPi$. Because there exists a $\sigma'' \supseteq \sigma'$ that satisfies $\phi'$, and $\phi'$ implies $\sigma'$, we have $\sigma' \cong \phi'$.
\end{enumerate}

\item $M = xM$:

\wwtz{(F~\ret_0~\ret\psi) \land \phi_n}
We have $\sigma \cong \phi$, so by the Inductive Hypothesis:
\[\exists \sigma_0 \supseteq \sigma . \sigma_0 \cong \phi_0\]
Since $\sigma_0 \cong \phi_0$, by the Inductive Hypothesis applied consecutively:
\[\exists \sigma_n \supseteq \dots \supseteq \sigma_0 . (\sigma_n \cong \phi_n) \land ... \land (\sigma_{0} \cong \phi_{0})\]
Let us set $\chi_0,\dots,\chi_n = \sigma_n(\ret_0),\dots,\sigma_n(\ret_n)$. Let us also set $\sigma(x) = m_j$ where $j \in \{1,\dots,n\}$. By cases on $\chi_0$:
\begin{enumerate}
\item if $\chi_0 = v$, we choose $\sigma' = \sigma_n[\ret \mapsto \chi_j, C'(r) \mapsto D_j(r)]$ for each $r \in \varPi$.

\item if $\chi_0 \in \{\fail,\nil\}$, we choose $\sigma' = \sigma_n[\ret \mapsto \chi_0]$. We know $\sigma'$ uniquely satisfies $\phi'$ up to the disjoint succeeding clauses of $F$ involving $\psi$, which contains $C'(r)$ for all $r \in \varPi$. Since the disjunction is trivially true, we are allowed to extend $\sigma'$ to map $C'(r)$ to an arbitrary value, e.g. $\sigma''= \sigma'[C(r)\mapsto \chi_0]$ for each $r \in \varPi$. Because there exists a $\sigma'' \supseteq \sigma'$ that satisfies $\phi'$, and $\phi'$ implies $\sigma'$, we have $\sigma' \cong \phi'$.
\end{enumerate}

\end{enumerate}
\end{proof}
\begin{lemma}[Preservation of the repository]\label{lem:3}
Given a translation $\sem{M,R,C,D,\phi,k}=(\ret,\phi',R',C',D')$, we know the input repository must be preserved; i.e. $R' \supseteq R$.
\end{lemma}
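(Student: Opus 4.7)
The plan is to prove Lemma~\ref{lem:3} by a straightforward structural induction on the term $M$ (with an outer induction on the bound $k$ to cover the $k=\nil$ clause). The statement is purely a monotonicity property of the translation, and by inspection of the defining clauses it follows because no rule ever removes entries from the repository\,--\,each rule either returns the input repository unchanged, extends it by a single fresh mapping, or threads it through a sequence of recursive calls.

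For the base cases, I would simply read off the translation definitions: the clauses for $\nil$, $\fail$, values $v \in \{i,(),x,m\}$, and $!r$ all return $R$ verbatim, so $R' = R \supseteq R$. The only non-trivial base case is $\lambda x.M$, for which the translation sets $R' = R[m \mapsto \lambda x.M]$ with $m$ fresh; since this is an extension of $R$, we again have $R' \supseteq R$. The $k=\nil$ clause returns $R$ unchanged, handling the outer induction base.

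For the inductive cases, the key observation is that each rule composes sub-translations $\sem{M_i,R_{i-1},\dots} = (\ret_i,\phi_i,R_i,C_i,D_i)$ where the repository of one call is fed as input to the next, and the overall output $R'$ is set to the final $R_n$ in this chain. The inductive hypothesis applied to each subcall gives $R_i \supseteq R_{i-1}$, so by transitivity $R' = R_n \supseteq R_{n-1} \supseteq \cdots \supseteq R_0 = R$. The cases for $\pi_i M$, $r:=M$, $M_1\oplus M_2$, $\pair{M_1,M_2}$, $\texttt{let}$, $m\,M$, the conditional, and non-deterministic application $x^\theta M$ all fit this pattern uniformly. For $\texttt{letrec $f=\lambda x.M$ in $M'$}$ the rule explicitly extends $R$ with a fresh $m\mapsto \lambda x.M\{f'/f\}$ before recursing on $M'\{f'/f\}$, so we chain the extension with the IH applied to that recursive call.

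There is no real obstacle here; the only mild bookkeeping is in the branching cases. For the conditional, the two branches are translated sequentially as $\sem{M_b}\to\sem{M_0}\to\sem{M_1}$, with each call's repository forwarded to the next, so the IH immediately gives $R_1 \supseteq R_0 \supseteq R_b \supseteq R$, and the output is $R_1$. For $x^\theta M$, the translation first computes $\sem{M}$ obtaining $R_0$, and then for each $m_i \in R\upharpoonright\theta$ it produces $R_i$ from $R_{i-1}$ via the IH; the output is $R_n$, which again extends $R$ by transitivity. The empty case ($R\upharpoonright\theta=\emptyset$) returns $R_0 \supseteq R$ directly. This completes the induction.
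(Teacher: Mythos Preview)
Your proposal is correct and is essentially the same approach as the paper's own proof, which simply reads ``By inspection of the translation rules.'' You have spelled out that inspection in full detail, and every step matches the defining clauses of the translation.
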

\begin{proof}
By inspection of the translation rules.
\end{proof}
\begin{lemma}[Propagation of preconditions]\label{lem:4}
Given a translation $\sem{M,R,C,D,\phi,k}=(\ret,\phi',R',C',D')$, we know that preconditions $\phi$ must be propagated and included in $\phi'$; i.e. $\phi'=\psi\land\phi$ where $\sem{M,R,C,D,\top,k}=(\ret,\psi,R',C',D')$
\end{lemma}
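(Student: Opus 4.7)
\textbf{Proof plan for Lemma~\ref{lem:4}.} The plan is to proceed by structural induction on $M$, with an outer case split on whether $k=\nil$. The intuition is that the input formula $\phi$ participates in the translation only by being conjoined, unchanged, into the output; every other component ($\ret$, $R'$, $C'$, $D'$, and the structural part of $\phi'$) is computed from $M,R,C,D,k$ alone. Before starting the induction, I would make this precise by a preliminary remark: because the fresh-name generator is deterministic on the same state $(R,C,D)$ and $\phi$ never feeds back into updates of $R$, $C$, or $D$ anywhere in the rules, running $\sem{M,R,C,D,\top,k}$ produces exactly the same $\ret$, $R'$, $C'$, $D'$ as running $\sem{M,R,C,D,\phi,k}$. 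Hence it suffices to exhibit, in each case, the structural part $\psi$ and check that $\phi'=\psi\land\phi$.

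The base cases ($k=\nil$, $\fail$, $v$, ${!}r$, $\lambda x.M$) are immediate from the rules: each output formula has the shape $\texttt{(new clause)}\land\phi$, so I can read off $\psi$ directly. For the unary inductive cases ($\pi_iM$, $r:=M$), I invoke the induction hypothesis on the single recursive call to obtain $\phi_1=\psi_1\land\phi$, and then the output $(F~\ret_1~\ret~\ldots)\land\phi_1$ rearranges to $\bigl((F~\ret_1~\ret~\ldots)\land\psi_1\bigr)\land\phi$; comparing with $\sem{M,R,C,D,\top,k}$, which by the preliminary remark produces the identical $F$-clause with $\top$ replacing $\phi$ at the bottom, confirms that the $\psi$ here agrees with the $\psi$ obtained starting from $\top$. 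The binary/sequencing cases ($M_1\oplus M_2$, $\pair{M_1,M_2}$, $\letin{x=M}\,M'$, $mM$) are handled by two successive applications of the IH, yielding $\phi_2=\psi_2'\land\psi_1'\land\phi$, and again the top-level conjuncts do not mention $\phi$. The $\texttt{letrec}$ case uses the IH on the body with input $\phi\land(f'=m)$; the IH gives $\phi'=\psi\land(\phi\land(f'=m))$, so the required $\psi$ is $\psi\land(f'=m)$.

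The more intricate cases are the two branching constructs. For $\texttt{if}\,M_b\,\texttt{then}\,M_1\,\texttt{else}\,M_0$, I apply the IH three times in the order the translation chains the calls ($M_b$, then $M_0$, then $M_1$), obtaining $\phi_b=\psi_b'\land\phi$, $\phi_0=\psi_0''\land\phi_b$, and $\phi_1=\psi_1''\land\phi_0$, so $\phi_1=\psi_1''\land\psi_0''\land\psi_b'\land\phi$; the outer clause $(F~\ret_b~\ret~(\psi_0\land\psi_1))$ involves only $\ret_b,\ret_0,\ret_1$ and the updated SSA map $C'$, none of which depend on $\phi$. For $x^\theta\,M$, I first apply the IH to the translation of $M$, and then inductively to the $n$ method-body translations in the loop, obtaining $\phi_n=\psi_n'\land\cdots\land\psi_1'\land\phi_0=\psi_n'\land\cdots\land\psi_1'\land\psi_0'\land\phi$; the top-level $(F~\ret_0~\ret~\psi)$ clause again only references $\ret_i$'s and $C_n'$.

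The main obstacle, and the reason the preliminary remark is worth stating explicitly, is ensuring that the $\psi$ extracted from $\sem{M,R,C,D,\phi,k}$ literally coincides with the formula $\sem{M,R,C,D,\top,k}$ returns, rather than merely being logically equivalent to it. This comes down to the observation that the translation's control flow\,--\,which fresh names it allocates, which references it updates, which subterms it recurses into\,--\,is driven entirely by $M,R,C,D,k$ and by the (deterministic) fresh-name generator, while $\phi$ is threaded through only as a purely additive conjunct. Once this invariant is pinned down, every inductive case reduces to a syntactic rearrangement of conjunctions.
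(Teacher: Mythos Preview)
Your proposal is correct and is precisely what the paper's one-line proof (``By inspection of the translation rules'') amounts to when written out in full. You have simply made explicit the structural induction and the invariant that $\phi$ is threaded through the translation purely as an additive conjunct, which is exactly the content of ``inspection'' here.
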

\begin{proof}
By inspection of the translation rules.
\end{proof}
\section{Optimised BMC translation}\label{apx:full}

\noindent
\textbf{Base Cases:}
\begin{itemize}
	%~~~~~ k = NIL ~~~~~
	\item $\llbracket M,R,C,D,pt,\phi,\nil\rrbracket = (\ret,(\ret=\nil)\land \phi,R,C,D,\varnothing,pt)$
	%~~~~~ M = FAIL ~~~~~
	\item $\llbracket \fail,R,C,D,pt,\phi,k\rrbracket = (\ret,(\ret=\fail)\land \phi,R,C,D,\varnothing,pt)$
	%~~~~~ M = v ~~~~~
	\item $\begin{aligned}[t]
    &\llbracket v,R,C,D,pt,\phi,k\rrbracket = (\ret,(\ret=v)\land \phi,R,C,D,\varnothing,pt) \text{ where $v = i,()$}
    \end{aligned}$ 
    %~~~~~ M = m ~~~~~
    \item $\begin{aligned}[t]
    &\llbracket m,R,C,D,pt,\phi,k\rrbracket = (\ret,(\ret=m)\land \phi,R,C,D,\{m\},pt)
    \end{aligned}$ 
    %~~~~~ M = m ~~~~~
    \item $\begin{aligned}[t]
    &\llbracket x,R,C,D,pt,\phi,k\rrbracket = (\ret,(\ret=x)\land \phi,R,C,D,pt(x),pt)
    \end{aligned}$ 
	%~~~~~ M = !r ~~~~~
	\item $\llbracket {!}r,R,C,D,pt,\phi,k\rrbracket = (\ret,(\ret=D(r))\land \phi,R,C,D,pt(r),pt)$
	%~~~~~ M = \x.M ~~~~~
	\item $\begin{aligned}[t]
    &\llbracket \lambda x.M,R,C,D,pt,\phi,k\rrbracket = (\ret,(\ret=m)\land \phi,R',C,D,\{m\},pt)\\
    &\quad\text{ where $R'=R[m \mapsto \lambda x.M]$ and $m$ fresh}
    \end{aligned}$
\end{itemize}

\noindent
\textbf{Inductive Cases:}
\begin{itemize}
	%~~~~~ M = pi_i M ~~~~~
	\item $
    	\begin{aligned}[t]
    		& \llbracket\pi_i\,M,R,C,D,pt,\phi,k\rrbracket =\\
    		&\quad \begin{aligned}[t]
    			& \letin{(\ret_1,\phi_1,R_1,C_1,D_1,A_1,pt_1) = \llbracket M,R,C,D,pt,\phi,k\rrbracket}\\
    			& (\ret,(F~\ret_1~\ret~(\ret=\pi_i\,ret_1))\land \phi_1,R_1,C_1,D_1,\pi_i\,A_1,pt_1)
    		\end{aligned}
    	\end{aligned}$
	%~~~~~ M = r := M ~~~~~
	\item $
    	\begin{aligned}[t]
    		& \llbracket r:=M,R,C,D,pt,\phi,k\rrbracket =\\
    		&\quad \begin{aligned}[t]
    			& \letin{(\ret_1,\phi_1,R_1,C_1,D_1,A_1,pt_1) = \llbracket M,R,C,D,pt,\phi,k\rrbracket}\\
    			& \letin{C_1' = C_1[r]}\
    			\letin{D_1' = D_1[r\mapsto C_1'(r)]}\\
    			& (\ret,(F~\ret_1~\ret~((\ret={()}) \land (D_1'(r)=\ret_1)))\land \phi_1,R_1,C_1',D_1',\varnothing,pt_1[r \mapsto A_1])
    		\end{aligned}
    	\end{aligned}$
	%~~~~~ M = M1 + M2 ~~~~~
	\item $
    	\begin{aligned}[t]
    		& \llbracket M_1 \oplus M_2,R,C,D,pt,\phi,k\rrbracket =\\
    		&\quad \begin{aligned}[t]
    			& \letin{(\ret_1,\phi_1,R_1,C_1,D_1,A_1,pt_1) = \llbracket M,R,C,D,pt,\phi,k\rrbracket}\\
    			& \letin{(\ret_2,\phi_2,R_2,C_2,D_2,A_2,pt_2) = 
    			 \llbracket M_2,R_1,C_1,D_1,pt_1,\phi_1,k\rrbracket}\\
    			& (\ret,(F~\ret_1~\ret~(F~\ret_2~\ret~(\ret=\ret_1 \oplus \ret_2))) \land \phi_2,R_2,C_2,D_2,\varnothing,pt_2)
    		\end{aligned}
    	\end{aligned}$
	%~~~~~ M = <M1,M2> ~~~~~
	\item $
    	\begin{aligned}[t]
    	& \llbracket \pair{M_1,M_2},R,C,D,pt,\phi,k\rrbracket =\\
    		&\quad \begin{aligned}[t]
    			& \letin{(\ret_1,\phi_1,R_1,C_1,D_1,A_1,pt_1) = \llbracket M,R,C,D,pt,\phi,k\rrbracket}\\
    			& \letin{(\ret_2,\phi_2,R_2,C_2,D_2,A_2,pt_2) = 
    			 \llbracket M_2,R_1,C_1,D_1,pt_1,\phi_1,k\rrbracket}\\
    			& (\ret,(F~\ret_1~\ret~(F~\ret_2~\ret~(\ret= \pair{\ret_1,\ret_2}))) \land \phi_2,R_2,C_2,D_2,\pair{A_1,A_2},pt_2)
    		\end{aligned}
    	\end{aligned}$
	%~~~~~ M = let x = M in M' ~~~~~
	\item $
    	\begin{aligned}[t]
    	& \llbracket \texttt{let $x=M$ in $M'$},R,C,D,pt,\phi,k\rrbracket =\\
    		&\quad \begin{aligned}[t]
    			& \letin{(\ret_1,\phi_1,R_1,C_1,D_1,A_1,pt_1) = \llbracket M,R,C,D,pt,\phi,k\rrbracket}\\
    			& \letin{(\ret_2,\phi_2,R_2,C_2,D_2,A_2,pt_2) =\llbracket M'\{\ret_1/x\},R_1,C_1,D_1,pt_1[\ret_1 \mapsto A_1],\phi_1,k\rrbracket}\\
    			& (\ret,(F~\ret_1~\ret~(F~\ret_2~\ret~(\ret=\ret_2)))\land \phi_2,R_2,C_2,D_2,A_2,pt_2)
    		\end{aligned}
    	\end{aligned}$
    %~~~~~ M = letrec x = \x.M in M' ~~~~~
    \item $
    	\begin{aligned}[t]
    	& \llbracket \texttt{letrec $f=\lambda x.M$ in $M'$},R,C,D,pt,\phi,k\rrbracket =\\
    		&\quad \begin{aligned}[t]
    			& \letin{\text{$m,f'$ be fresh}}\\
                & \letin{R' = R[m \mapsto \lambda x.M\{f'/f\}]}\\
    			& \llbracket M'\{f'/f\},R',C,D,pt[f' \mapsto \{m\}],\phi \land (f' = m),k\rrbracket
    		\end{aligned}
    	\end{aligned}$
	%~~~~~ M = m M ~~~~~
	\item $
    	\begin{aligned}[t]
    		& \llbracket m\,M,R,C,D,pt,\phi,k\rrbracket =\\
    		&\quad \begin{aligned}[t]
    			& \letin{(\ret_1,\phi_1,R_1,C_1,D_1,A_1,pt_1) = \llbracket M,R,C,D,pt,\phi,k\rrbracket}\\
    			& \letin{R(m)\text{ be }\lambda x.N}\\
    			& \letin{(\ret_2,\phi_2,R_2,C_2,D_2) = \llbracket N\{\ret_1/x\},R_1,C_1,D_1,pt_1[\ret_1 \mapsto A_1],\phi_1,k-1\rrbracket}\\
    			& (\ret,(F~\ret_1~\ret~(F~\ret_2~\ret~(\ret=\ret_2)))\land \phi_2,R_2,C_2,D_2,A_2,pt_2)
    		\end{aligned}
    	\end{aligned}$
	%~~~~~ M = if Mb then M1 else M0 ~~~~~	
	\item $
    	\begin{aligned}[t]
    		& \llbracket \texttt{if $M_b$ then $M_1$ else $M_0$},R,C,D,pt,\phi,k\rrbracket =\\
    		&\quad \begin{aligned}[t]
    			& \letin{(\ret_b,\phi_b,R_b,C_b,D_b,A_b,pt_b) = \llbracket M_b,R,C,D,pt,\phi,k\rrbracket}\\
    			& \letin{(\ret_0,\phi_0,R_0,C_0,D_0,A_0,pt_0) = 
    			 \llbracket 	M_0,R_b,C_b,D_b,pt_b,\phi_b,k\rrbracket}\\
    			& \letin{(\ret_1,\phi_1,R_1,C_1,D_1,A_1,pt_1) = 
    			 \llbracket M_1,R_0,C_0,D_b,pt_b,\phi_0,k\rrbracket}\\
    			& \letin{C' = C_1[r_1]\cdots[r_n]\ (\varPi=\{r_1,\dots,r_n\})}\\
    			& \letin{\psi_0 = (\ret_b = 0) \implies (F~\ret_0~\ret~((\ret=\ret_0) \land\!\! \bigwedge_{r \in \varPi}\!\! (C'(r)=D_1(r))))}\\
                            % & \qquad\quad\left(\vcenter{\hbox{$\displaystyle
    			% \begin{aligned}[t]
    			% & F~\ret_0~\ret~((\ret=\ret_0) \land \\
    			% & \bigwedge_{r \in Dom(C')} (C'(r)=D_0(r)))
    			% \end{aligned}	
    			% $}}\right)}
    			% \\
                          & \letin{\psi_1 = (\ret_b \neq 0) \implies
                            (F~\ret_1~\ret~((\ret=\ret_1) \land\!\!\bigwedge_{r \in \varPi}\!\! (C'(r)=D_1(r))))}\\
                            % & \qquad \left(\vcenter{\hbox{$\displaystyle
    			% \begin{aligned}[t]
    			% & F~\ret_1~\ret~((\ret=\ret_1) \land \\
    			% & \bigwedge_{r \in Dom(C')} (C'(r)=D_1(r)))
    			% \end{aligned}	
                            % $}}\right)}
                            % \\
    			& (\ret,(F~\ret_b~\ret~(\psi_0 \land \psi_1))\land \phi_1,R_1,C',C',A_0 \cup A_1,merge(pt_0,pt_1))
    		\end{aligned}
    	\end{aligned}$
	%~~~~~ M = x M ~~~~~
	\item $
    	\begin{aligned}[t]
    		& \llbracket x^\theta\,M,R,C,D,pt,\phi,k\rrbracket =\\
    		&\quad \begin{aligned}[t]
                      & \letin{(\ret_0,\phi_0,R_0,C_0,D_0,A_0,pt_0) = \llbracket M,R,C,D,pt,\phi,k\rrbracket}\\
                      &\text{if }pt(x) =\emptyset\text{ then }(\ret,(\ret=\nil)\land\phi_0,R_0,C_0,D_0,\varnothing,pt_0)\\
    &\text{else }
    			\letin{pt(x)\text{ be }\{m_1,...,m_n\}}\\
    			&\text{for each }i \in \{1,...,n\}:\\
    			&\quad
    			\begin{aligned}[t]
    				& \letin{R(m_i)\text{ be } \lambda y_i.N}\\
    				& \letin{(\ret_i,\phi_i,R_i,C_i,D_i,A_i,pt_i) = \llbracket N_i\{\ret_0/y_i\},R_{i-1},
    				C_{i-1},D_0,pt_0,\phi_{i-1},k-1\rrbracket}\\
    			\end{aligned}\\
    			& \letin{C_n' = C_n[r_1]\cdots[r_j]\ (\varPi=\{r_1,\dots,r_j\})}\\
    			& \letin{\psi = \bigwedge_{i=1}^{n}
    				\left(\vcenter{\hbox{$\displaystyle
    				\begin{aligned}[t]
    					&(x=m_i) \implies ((F~\ret_i~\ret~(\ret=\ret_i)) \land\bigwedge_{r \in \varPi}{(C_n'(r)=D_i(r))})
    				\end{aligned}$}}\right)}\\
    			& (\ret,(F~\ret_0~\ret~\psi)\land \phi_n,R_n,C_n',C_n',A_1 \cup \dots\cup A_n,merge(pt_1, \dots, pt_n))
    			\end{aligned}
    	\end{aligned}$
\end{itemize}

%%% Local Variables:
%%% mode: latex
%%% TeX-master: "../main"
%%% End:

\end{subappendices}
\end{document}